\documentclass[preprint,12pt,english]{article}
\usepackage{amsmath,amsthm}
\usepackage[utf8x]{inputenc}
\usepackage{amssymb,textcomp,xcolor}
\usepackage{subfig}
\usepackage{subfig}
\usepackage{tikz}
\usepgflibrary{shapes.symbols} 
\usepgflibrary[shapes.symbols] 
\usetikzlibrary{shapes.symbols} 
\usetikzlibrary[shapes.symbols] 
\usepgflibrary{shapes.geometric} 
\usepgflibrary[shapes.geometric] 
\usetikzlibrary{shapes.geometric} 
\usetikzlibrary[shapes.geometric] 

\newcommand{\dahntab}[1]{
  \newbox\mybok%
  \setbox\mybok=\hbox{\vbox{
      \begin{tabbing}
        #1
      \end{tabbing}%
    }}

  \newdimen\bokwidth%
  \bokwidth=\wd\mybok%
  \newdimen\myl%
  \myl=\textwidth%
  \divide\myl by 2%
  \divide\bokwidth by -2%
  \advance\myl by\bokwidth%
  \vrule width\myl height 0pt depth 0pt%
  \usebox\mybok%
}
\newtheorem{observation}{Observation}
\newtheorem{definition}{Definition}
\newtheorem{theorem}{Theorem}
\newtheorem{proposition}{Proposition}
\newtheorem{claim}{Claim}
\newtheorem{lemma}{Lemma}
\newtheorem{expl}{Example}

\usepackage{natbib}
 %
 %
 %
 %
 %
 \bibpunct[, ]{[}{]}{,}{n}{}{,}%

\usepackage[colorlinks=true,breaklinks=true,bookmarks=true,urlcolor=blue,
     citecolor=blue,linkcolor=blue,bookmarksopen=false,draft=false]{hyperref}

\def\EMAIL#1{\href{mailto:#1}{#1}}
\def\URL#1{\href{#1}{#1}}         



\begin{document}




\title{\textbf{\Large Minimum Entropy Submodular Optimization (and Fairness in Cooperative Games)}}
\author{Cosmin Bonchi\c{s}, Gabriel Istrate\footnote{Department of Computer Science, West University of Timi\c{s}oara and the e-Austria Research Institute,  Bd. V. P\^{a}rvan 4, cam 045B, Timi\c{s}oara, RO-300223, Romania. Corresponding author email: {\tt gabrielistrate@acm.org }}}

\maketitle
\begin{abstract}
We study minimum entropy submodular optimization, a common generalization of 
the minimum entropy set cover problem, studied earlier by Cardinal et al., 
and the submodular set cover problem 
(Wolsey \cite{wolsey-submodular}, Fujishige \cite{generalized-ssc}, etc).

We give a general bound of the approximation performance of the greedy algorithm using an approach that can be interpreted in terms 
of a particular type of biased network flows. As an application we rederive known results for the Minimum Entropy Set Cover 
and Minimum Entropy Orientation problems, and obtain a nontrivial bound for a new problem called the Minimum Entropy Spanning Tree problem.
 
The problem can be applied to (and is partly motivated by) the definition of worst-case approaches to fairness in concave cooperative games, similar 
 to the notion of price of anarchy in noncooperative settings. 
 
\end{abstract}

 
\section{Introduction}

Submodularity is a significant structural property of set functions, encoding the notion of {\em diminishing returns} and plays a crucial role in many scientific areas including combinatorial optimization \cite{fujishige2005submodular}, cooperative game theory \cite{shapley1971cores,branzei2008models}, information theory \cite{madiman2010information} and in applications involving {\em clustering} \cite{nagano2010minimum}, {\em learning} \cite{golovin2011adaptive}, {\em social and sensor networks} \cite{kempe2003maximizing}, {\em natural language processing} \cite{lin2010multi}, {\em signal processing} \cite{cevher2011greedy} or {\em constraint satisfaction} \cite{azar2011submodular}, to give just a few examples. Submodular function 
optimization is a well-established paradigm and reasonably well-understood: minimization has polynomial time algorithms \cite{schrijver2000combinatorial,iwata2003faster,iwata2009simple} while maximization is intractable. On the other hand for well-behaved submodular functions (so called integer polymatroids)
$f:\mathcal{P}(S)\rightarrow {\bf Z}_{+}$ finding the maximum is simple: the ground set set $S$ is a trivial solution. The interesting problem is finding a solution (a subset $A\subseteq S$ satisfying $f(A)=f(S)$) having minimum cost. This is essentially an instance of the {\em submodular set cover} problem \cite{fujita-ssc}. 

Minimizing the cost of $A$ is not the only possible objective function to investigate in this setting: a maximum likelihood approach to an inference problem in computational biology led Halperin and Karp \cite{halperin-karp-sc} to study a {\em minimum entropy} version of the set cover problem. Finding an optimal solution is in general NP-hard, but Halperin and Karp showed that the GREEDY algorithm produces an approximate solution whose entropy differs only by a constant factor to that of the optimal solution. A tight estimate was obtained by Cardinal et al. \cite{tight-minentropy-setcover} who subsequently studied other combinatorial problems under minimum entropy objectives \cite{cardinal2008minimum,cardinal2009minimum}. 

It must be stressed that minimizing entropy is an approach that goes beyond the problem studied by Halperin and Karp: for covering-type problems the connection between maximum-likelihood and minimum entropy is quite general. To give just one example, an even earlier problem that exploited the connection between maximum likelihood and minimum entropy  is  {\em word segmentation} \cite{wang2001minimum}. Minimizing entropic measures has other applications: for instance, in \cite{jajamovich2012maximum} the authors consider a sparse dictionary-based approach to maximum parsimony haplotype inference via minimizing a non-extensive variant of the entropy called {\em Tsallis entropy}. 

In this paper we unify these two directions, submodular optimization and combinatorial optimization under minimum entropy objective function, by investigating a minimum entropy version of the submodular set cover problem. While the problem is clearly NP-complete, our main result show that the approximation performance of the GREEDY algorithm can be quantified using a covering-like parameter that has an interpretation in terms of a type of certain ``biased'' network flows. This interpretation allows a fairly illuminating rederivation of results in \cite{cardinal2008minimum,cardinal2009minimum}. We then showcase the power of the method by providing an upper bound on the performance of the approximation performance of the greedy algorithm for a new problem, the {\em minimum entropy spanning tree problem}.

Besides the conceptual integration the framework we investigate was developed with several applications in mind. The most important of them (developed in a companion paper \cite{istrate-bonchis2012-tugames}) concerns the development of a worst-case approaches to fairness in concave cooperative games similar in spirit to the {\em price of anarchy} in noncooperative settings. The measure we propose are based on entropic concepts such as the Shannon divergence. We briefly outline this direction in Section~\ref{games}. Other potential applications arise in information theory \cite{smieja-tabor} and maximum-likelihood approaches to machine learning (in settings inspired by \cite{guillory2010interactive,golovin2011adaptive}).
 We plan to further explore and develop these connections in subsequent work. 

The plan of the paper is as follows: in Section 2 we briefly review some relevant concepts and notions. In Section 3 we point out the fact that the problems we are interested in are computationally intractable (NP-complete); we also introduce a greedy approach to minimum entropy submodular set cover. In section 4 we discuss an integer programming formulation of this problem. Section 5 contains our main result: we quantify the performance of the GREEDY algorithm with the help of a "covering constant" developed using the IP in Section 4. We then rederive (in Section 6) existing results 
on the performance of the GREEDY algorithm for the Minimum Entropy Orientations and the Minimum Entropy Set Cover problems \cite{cardinal2008minimum,cardinal2009minimum}. Section 7 contains an interpretation of the 
covering constant using network flows that allows us to tighten up our main theorem using a "multi-level" version of our covering constant. As an application we obtain a 
result on the approximability of the Minimum Entropy Spanning Tree problem. We also briefly discuss the intended application to cooperative game theory. 

\section{Preliminaries} 
\label{sec2}

We will assume general familiarity with submodular optimization, see e.g. \cite{fujishige2005submodular}. In particular a set function $f: \mathcal{P}(U)\rightarrow {\bf R}_{+}$ will be called {\em monotone} if $f(S)\leq f(T)$ whenever $S\subseteq T\subseteq U$,  {\em submodular} if $f(S)+f(T)\geq f(S\cup T)+f(S\cap T)$ for all $S,T \subseteq U$, {\em modular} if $f(S)+f(T)= f(S\cup T)+f(S\cap T)$ for all $S,T \subseteq U$, and {\em polymatroidal} if $f$ is monotone, submodular and satisfies $f(\emptyset)=0$. 

We will use {\em the Shannon entropy} of a distribution $P=(p_{i})_{i\in I}$, defined as $Ent(P)=-\sum_{i\in I} p_{i}\log_{2}(p_{i})$. 

An instance of the classical {\em (Minimum Cost) Set Cover} (SC) is specified by an universe $U$ and a family $\mathcal{P}=\{\mathcal{P}_{1},\ldots, \mathcal{P}_{m}\}$ of parts of $U$. Each set $\mathcal{P}_{i}$ comes with a 
nonnegative {\em cost} $c(i)$. The goal is to cover the set $U$ by a family of parts from $\mathcal{P}$ of minimal total cost. 

The following classical extension of the set cover problem \cite{fujita-ssc} shares many properties with problem SC.  

\begin{definition}\mbox{ } {\bf [SUBMODULAR SET COVER ] (SSC): }
\begin{enumerate}
\item\noindent[GIVEN:] A set $U$ and a monotone, submodular function $f:\mathcal{P}(U)\rightarrow {\bf Z}_{+}$ and a {\em cost function} 
$c:U\rightarrow {\bf R}_{+}$. The cost of a set $S$, denoted $c(S)$, is simply the sum of costs of its elements. Without lost of generality we can assume that $U=\{1, 2, \dots, m\}$ (also denoted [m]). Define also $N=f(U)$. 

\item\noindent[SOLUTIONS:] Subsets $S\subseteq U$ with $f(S)=f(U)$ (such a set is called {\em feasible}).
\item\noindent[OBJECTIVE:] To find a feasible subset $S\subseteq U$ of minimum cost. 
\end{enumerate}

\label{first-def}
\end{definition}

In particular, the performance of the Greedy algorithm for SSC was studied by Wolsey \cite{wolsey-submodular}, who showed that results for SC extend to this setup. Other generalizations were given, among other papers, in \cite{generalized-ssc}. 

For readers not familiar with SSC it is worth discussing the representation of problem SC as a special case of SSC, since a similar extension will motivate the technical definition of the problem we are interested in. 

Given any instance $(X,\mathcal{P})$ of SC of unit costs, define corresponding instance $(U,f)$ of SSC as follows:  
\begin{enumerate} 
 \item $U=\{1,2,\ldots, m\}$. 
\item For $S\subseteq U$ define $X_{S}=\bigcup_{i\in S}P_{i}$ and $f(S)=|X_{S}|$.  
\end{enumerate} 

It is well-known that function $f$ defined above is submodular. A set $S\subseteq U$ with $f(S)=f(U)$ corresponds to a family of parts $\{P_i\}_{i \in S}$  which cover $X$. 

Halperin and Karp introduced \cite{halperin-karp-sc} a variation of the SC problem that employs a different objective function: 

\begin{definition}\mbox{ } {\bf [MINIMUM ENTROPY SET COVER (MESC)]:}\\ 
 Let $X=\{x_1,x_2,\ldots, x_n\}$ for some $n\geq 1$ and $\mathcal{P}=\{P_{1},P_{2},\ldots, P_{m}\}$ be a family of subsets of $X$ which covers 
$X$. A {\em cover} is a function $g:X\rightarrow [m]$ such that for every $1\leq i \leq n$, 
\[
 x_{i}\in P_{g(i)}\mbox{(``$x_{i}$ is covered by set $P_{g_{i}}$'')}
\]

The {\bf entropy of cover g} is defined as 
\begin{equation} 
Ent(g)=-\sum_{i\in U} \frac{g(\{i\})}{g(U)}\ln\left(\frac{g(\{i\})}{g(U)}\right).
\label{entcover}
\end{equation} 
The objective of MESC is to find a cover $g$ of minimum entropy. 
\label{expl-sc}
\end{definition}

In the same way that problem SC was generalized to SSC, we extend problem MESC from Definition~\ref{first-def} to the following: 

\begin{definition}\mbox{ } {\bf [MIN-ENTROPY SUBMODULAR SET COVER] (MESSC): }
\begin{enumerate}
\item\noindent[GIVEN:] A set $U$ and a polymatroidal function $f:\mathcal{P}(U)\rightarrow {\bf Z}_{+}$. 
\item\noindent[SOLUTIONS: ] A {\em cover of f,} that is a modular function 
$g: \mathcal{P}(U)\rightarrow {\bf Z}_{+}$ with $g(U)=f(U)$ and $0\leq g(S)\leq f(S)$ for all $S\subseteq U$. 

The entropy of cover $g$ is defined as in equation~(\ref{entcover}).
\item\noindent[OBJECTIVE:]  Find a cover $g$ of $f$ of minimum entropy. 
\end{enumerate} 
\label{second-def}
\end{definition}




\section{Submodular Optimization with restrictions on solution structure: the Minimum Entropy Spanning Tree Problem} ~\\

Many submodular optimization problems arise from cooperative games on combinatorial structures \cite{bilbao2000cooperative}.
 In such games solutions are subject to  further constraints: A natural example is the setting where solution components form an independent set in a certain matroid. 

Games on matroids, or where the possible coalitions form a matroid have been thoroughly investigated in the literature (e.g. \cite{nagamochi1997complexity,bilbao2000cooperative,bilbao2001shapley,maffioli2007least}). One could naturally define a "min-entropy" version of minimum-base games on matroids. 
Such games essentially capture all instances of problem MESSC, as any integer polymatroid can be represented
using a set of flats in a certain matroid \cite{oxley2006matroid}.
 At this moment we are unable to deal with this problem in full generality. Instead we will concentrate on a special case, a combinatorial problem to which our main result will apply in a fairly nontrivial way. 

The problem we consider is a variant of {\em spanning tree games} a classical topic in the area of cooperative games on combinatorial structures (e.g. \cite{claus1973cost,bird1976cost,granot1981minimum,granot1984core,faigle1997complexity}):

\begin{definition}{\bf  [MIN-ENTROPY SPANNING TREE] (MEST):} 
\begin{enumerate}
\item\noindent[GIVEN:] A connected graph  $G=(V,E)$.
\item\noindent[SOLUTIONS: ] Given $S\subseteq E$, a {\em cover of $S$} is a function $u:S\rightarrow V$ such that for all $e\in S$, $u(e)$ is a vertex of $e$. 
\item\noindent[OBJECTIVES:] A spanning tree $T = (V, E(T)) \subseteq G$ and a cover $u$ of $E(T)$ that minimizes the entropy 
\[
Ent(T;u)=-\sum_{i\in V} \frac{|u^{-1}(\{i\}|}{|E(T)|}\log_{2}\left[\frac{|u^{-1}(\{i\}|}{|E(T)|}\right].
\]
Intuitively in MEST players correspond to nodes of the graph, each of which can only contribute (some of) the edges it is adjacent to, each edge at a unit cost. 
The goal of the players is to form a spanning tree with the contributed edges. We seek the ``most unbalanced'' (costwise) spanning tree. 
\end{enumerate} 
\label{def-mest}
\end{definition} 

Unlike many of the settings in the papers quoted above  
we allow a player to control a {\em set of edges}, rather than a single one. 
This particular choice ensures the fact that the cost function is submodular. In contrast, in the more classical variants of spanning tree games only a property weaker than submodularity called {\em permutational convexity} holds \cite{granot1982relationship}. 

Indeed, one can consider MEST as a problem with matroid restrictions on solution structure by considering the {\em cycle matroid} $M(G)$ of graph $G$, the matroid whose independent sets consist of sets $\mathcal{I}$ of edges of $G$ that do not contain a cycle. Bases in this matroid correspond to spanning trees of $G$. 
For all $S \subseteq V$ define
\begin{equation}
f(S)=\max\{|\mathcal{I}|:\mathcal{I}\in M(G), \forall e\in \mathcal{I}, e \mbox{ has at least one vertex in } S\}
\label{cyclematroid}
\end{equation}

\begin{lemma} 
Function $f$ from equation~(\ref{cyclematroid}) is submodular. 
\end{lemma}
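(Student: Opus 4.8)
The plan is to recognize $f$ as the composition of the matroid rank function with the map sending a vertex set to its incident edge set, and then exploit the fact that rank functions are themselves submodular and monotone. Concretely, for $S\subseteq V$ let $E_S=\{e\in E: e\cap S\neq\emptyset\}$ denote the set of edges having at least one endpoint in $S$. If $r$ denotes the rank function of the cycle matroid $M(G)$ (so $r(A)$ is the size of a largest forest contained in the edge set $A$), then the definition in equation~(\ref{cyclematroid}) says exactly that $f(S)=r(E_S)$. So the claim reduces to showing that $S\mapsto r(E_S)$ is submodular, and I would invoke the standard fact that the rank function of any matroid is both monotone and submodular.

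The core of the argument is two elementary set identities relating the incidence map to unions and intersections of vertex sets. First I would observe that an edge $e$ has an endpoint in $S\cup T$ precisely when it has an endpoint in $S$ or in $T$, giving the exact equality $E_{S\cup T}=E_S\cup E_T$. Second, I would note the \emph{one-directional} relation $E_{S\cap T}\subseteq E_S\cap E_T$: an edge with an endpoint in $S\cap T$ certainly meets both $S$ and $T$, but the converse fails because an edge with one endpoint in $S\setminus T$ and the other in $T\setminus S$ lies in $E_S\cap E_T$ yet not in $E_{S\cap T}$.

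With these in hand the conclusion follows by a short chain of inequalities. Starting from submodularity of the matroid rank $r$ applied to the edge sets $E_S$ and $E_T$,
\[
f(S)+f(T)=r(E_S)+r(E_T)\ \geq\ r(E_S\cup E_T)+r(E_S\cap E_T),
\]
I would substitute $E_S\cup E_T=E_{S\cup T}$ into the first term, and then use monotonicity of $r$ together with $E_{S\cap T}\subseteq E_S\cap E_T$ to bound the second term from below by $r(E_{S\cap T})$, yielding
\[
f(S)+f(T)\ \geq\ r(E_{S\cup T})+r(E_{S\cap T})=f(S\cup T)+f(S\cap T).
\]

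The step that deserves the most care, and the only place the argument could go wrong if done carelessly, is precisely this intersection term: since $E_{S\cap T}$ is in general a \emph{strict} subset of $E_S\cap E_T$, one cannot apply submodularity of $r$ directly to recover an $f$-expression on the right, and it is essential that monotonicity of $r$ points the inequality in the favorable direction. Everything else is routine, so no genuine obstacle remains; the same reasoning in fact shows that $S\mapsto r(E_S)$ is polymatroidal whenever $r$ is, which fits the framework of Definition~\ref{second-def}.
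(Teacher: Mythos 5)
Your proof is correct and follows essentially the same route as the paper's: both express $f$ as the cycle-matroid rank composed with the vertex-to-incident-edges map, use the identity $E_{S\cup T}=E_S\cup E_T$ together with the containment $E_{S\cap T}\subseteq E_S\cap E_T$, and combine submodularity with monotonicity of the rank function. Your write-up is in fact slightly more explicit than the paper's about why the intersection step needs monotonicity rather than a set equality, which is the one point where care is required.
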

\begin{proof} 
Define $g(S)$ to be the set of edges adjacent to $S.$ Function $g$ satisfies $g(S \cup T) = g(S) \cup g(T)$ and  $g(S \cap T) \subseteq g(S) \cap g(T).$
Let $r$ be the rank function of matroid $M(G).$ 
Clearly, $f(S) = r(g(S)).$ 

Appling the submodularity of the rank function $r$ to sets $g(S)$ and $g(T)$ we obtain: 
$f(S \cup T)+f(S \cap T) \leq r(g(S \cup T)) + r(g(S \cap T)) \leq r(g(S)) + r(g(T)) = f(S) +f(T)$
\end{proof}

\subsection{Computational intractability of problems MESSC and MEST.} 

Problems MESC, MESSC, MEST defined so far are combinatorial {\em optimization} problems. Turning them into decision problems is easy, though, in the standard manner: we just add an extra cost parameter $\lambda$ and ask to decide whether the given instance has a solution of cost at most $\lambda$. Without risk of confusion we will use the same name for the optimization problems and their corresponding decision variants. 

Since problem Minimum Entropy Set Cover Problem is NP-complete \cite{halperin-karp-sc}, so is its generalization MESSC. Theorem~\ref{thm:hard} shows that this is true for problem MEST as well, providing an alternate class of matroid restrictions (beyond those arising from set cover) for which the associated decision problem is hard. 

\begin{theorem}\label{thm:hard} 
Decision problem MEST is NP-complete. 
\end{theorem}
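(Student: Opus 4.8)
The plan is first to verify membership in NP and then to establish hardness by a reduction from Minimum Entropy Set Cover (MESC), which the excerpt already notes is NP-complete \cite{halperin-karp-sc}.

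For membership, a witness is a pair $(T,u)$ consisting of a spanning tree together with an orientation of its $|V|-1$ edges; this is of polynomial size, and from the induced load vector $d_i=|u^{-1}(\{i\})|$ one evaluates the objective directly. Writing $N=|V|-1$ and using the identity $Ent(T;u)=\log_2 N-\frac{1}{N}\sum_{i} d_i\log_2 d_i$, the test $Ent(T;u)\le\lambda$ reduces to comparing a rational combination of logarithms of integers bounded by $|V|$ against the threshold $\lambda$, which is decidable in polynomial time to the precision required. Hence MEST$\,\in\,$NP. Note also that, since $N$ is fixed once $G$ is given, minimizing $Ent(T;u)$ is the same as maximizing the convex potential $\sum_i d_i\log_2 d_i$, i.e.\ we are looking for a spanning tree and orientation whose in-degree profile is as \emph{concentrated} as possible. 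This matches the minimization flavour of MESC, where one wants a cover in which few sets absorb most elements, so a threshold-preserving reduction is plausible.

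For the reduction, given an MESC instance with universe $X=\{x_1,\dots,x_n\}$ and family $\{P_1,\dots,P_m\}$ covering $X$, I would build a graph $G$ with a set-vertex $s_j$ for each $P_j$, element-vertices for the elements, an edge $\{e_i,s_j\}$ whenever $x_i\in P_j$, and a fixed auxiliary skeleton on the set-vertices (a path $s_1s_2\cdots s_m$, or a star through a new apex $r$) to guarantee connectivity. In any spanning tree each element-vertex must be attached to one incident set-vertex $s_{g(i)}$, and this choice is exactly the choice of a covering set $P_{g(i)}$ for $x_i$; orienting $\{e_i,s_{g(i)}\}$ toward $s_{g(i)}$ makes the load $d_{s_j}$ equal the MESC load of $P_j$. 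To amplify the signal I would replace each element by $M$ parallel element-vertices adjacent to the same set-vertices: this multiplies all set loads by $M$, which leaves the normalized load distribution — hence the entropy — unchanged, while rendering the $O(m)$ skeleton edges negligible in comparison. Under a suitable normal form the value $Ent(T;u)$ at the MEST optimum is then an explicit affine, monotone function of the MESC entropy of the induced cover $g$, and one sets the MEST threshold to the image of the MESC threshold under this map.

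The main obstacle is the normal-form lemma that pins down this correspondence. One must show that an optimal MEST solution may be assumed (i) to make every element-vertex a leaf oriented toward its set-vertex, and (ii) to orient the skeleton edges into one canonical configuration that contributes a fixed load vector \emph{independent of the cover}. The danger is an unintended low-entropy solution that bypasses the cover structure — for instance routing the tree through a heavily shared element-vertex to turn it into an alternative hub, or re-orienting skeleton edges to rebalance loads onto the currently heaviest vertex (which convexity of $t\mapsto t\log_2 t$ actively encourages). I would rule these out by an exchange argument combined with the multiplicity blow-up, and — crucially — design the gadget so that the normal form is \emph{exact} (skeleton contribution constant, elements forced to leaves), so that $Ent(T;u)$ is determined exactly by the cover loads. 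This exactness is what makes the threshold transfer cleanly and sidesteps the delicate point that generic integer-load entropy values can be exponentially close; getting the gadget and the exchange argument to yield an exact, polynomial-time reduction is where the real work lies.
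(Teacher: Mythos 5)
Your overall strategy --- reduction from MESC with set-vertices, element-vertices, and a connectivity skeleton through an apex --- is the same as the paper's, and you correctly identify the crux: a normal-form lemma guaranteeing that the optimal spanning tree charges the skeleton a load vector \emph{independent of the cover} and forces every element-vertex to be a leaf charged to its set-vertex. But you leave exactly that lemma unproven, and the one concrete device you propose for it (replacing each element by $M$ parallel copies so that the skeleton becomes \emph{negligible}) does not deliver the \emph{exactness} you yourself observe is indispensable: a merely negligible skeleton contribution gives only an approximate threshold transfer, and since distinct covers can have entropies that differ by very small amounts, an approximate transfer does not decide the threshold question. As written, the proposal is therefore a plan whose central step is missing.

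The paper closes this gap with a specific gadget plus an elementary convexity claim. The skeleton is a star through an apex $R$, and --- this is the key move --- $R$ is additionally given $m+n-1$ pendant auxiliary leaves $A_1,\dots,A_{m+n-1}$, so that $R$'s forced load already dominates the possible load of any set-vertex $P_i$. The claim (for $1\le a\le b$, moving one unit of load from a class of size $a$ to a class of size $b$ never increases the entropy) then lets one assume that all forced edges $(R,A_i)$ and all star edges $(R,P_i)$ are charged to $R$; an exchange argument shows every $P_i$ must be joined to $R$ directly in an optimal tree, whence every element-vertex is a leaf attached to exactly one $P_i$ (two set-neighbours would close a cycle through $R$). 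This makes the skeleton contribution an exact constant and yields an exact affine relation between the MEST entropy and the entropy of the induced cover, hence an exact threshold map. If you want to complete your argument, adopt a pendant-leaf gadget of this kind in place of the blow-up; it is precisely what rules out your feared ``alternative hub'' solutions and what makes the normal form exact rather than asymptotic. (Your NP-membership remark also deserves a word of care: comparing a sum of logarithms of integers to a threshold is not known to be easy in general, but here the threshold produced by the reduction is itself built from logarithms of the same integers, so the comparison can be arranged exactly.)
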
 

The proof of Theorem~\ref{thm:hard} is given in the Appendix. 

\subsection{The Greedy Algorithm.} 

\begin{figure}[ht]
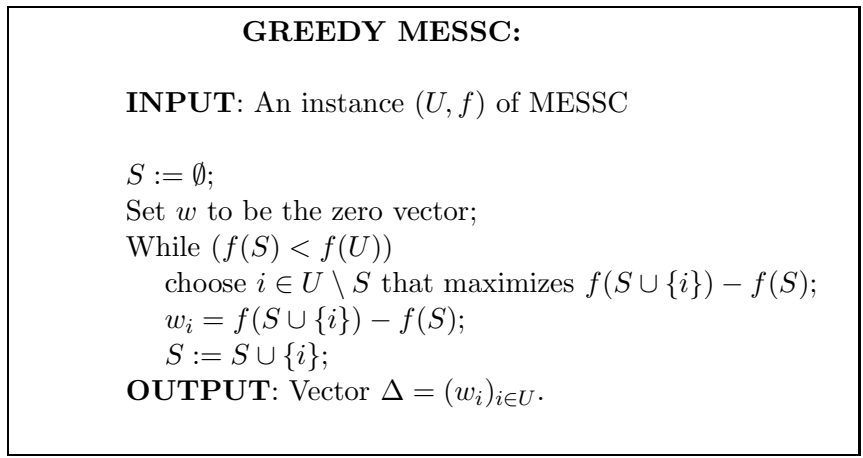

\begin{center}
\small{
\fbox{ 
\begin{minipage}[c]{0.8\columnwidth}%
\begin{center}
\dahntab{
\=\ \ \ \ \=\ \ \ \ \=\ \ \ \ \= 
{\bf GREEDY MESSC:} \\
\\
{\bf INPUT}: An instance $(U,f)$ of MESSC \\
\\
$S := \emptyset$; \\
Set $w$ to be the zero vector;\\ 
While ($f(S) < f(U)$) \\
\> \> choose $i\in U \setminus S$ that maximizes $f(S\cup \{ i \})-f(S)$;\\
\> \> $w_{i}= f(S\cup \{i\})-f(S)$;\\
\> \> $S := S\cup \{i\}$; \\
{\bf OUTPUT}: Vector $\Delta=(w_{i})_{i\in U}$. \\
}
\end{center}
\end{minipage}}}
\caption{Greedy algorithm for Minimum Entropy Submodular Set Cover.} 
\label{galg}
\end{center}
\end{figure}

Given the previous result, to solve problems MESSC or MEST we have to either resort to heuristic approaches or polynomial time  {\em approximation algorithms} \cite{vazirani-approx-book,williamson2011design}. In this paper we pursue the latter alternative. 

An approximation algorithm  based on the Greedy approach is presented in Figure~\ref{galg}. Note that it is well known that the resulting vector is a solution\footnote{This is easiest seen using the dual language of cooperative games. In so-called {\em concave games} (see Section~\ref{games} below) the core is non-empty \cite{shapley1971cores}, a polytope whose extremal points are those produced using a greedy approach on a given permutation of the elements of $U$. Our algorithm simply produces a particular such permutation.}. 

We will use, throughout the rest of the paper, the following notations: 
\begin{itemize} 
\item $i_{1},i_{2},\ldots, i_{l}$ will be the indices chosen by the GREEDY algorithm, in this order. Furthermore, we define for $1\leq r\leq l$ the {\em greedy rank function} by 
\[
rank(i_{r})=r.
\]
We extend function $rank$ to all elements of $U$ by considering an arbitrary (but fixed) ordering of such elements. 
\item For $1\leq r\leq l$, $W_{r}=\{i_{1},\ldots, i_{r}\}$ is the set of first $r$ elements added by the GREEDY algorithm; also $W_{0}=\emptyset$. $\Delta^{r}_{GREEDY}=w_{i_{r}}$ is the increase in the objective function caused by the $r$'th element chosen by $GREEDY$.   
\end{itemize} 

\begin{expl} 
 \mbox{ }Given a graph $G=(V,E),$ an instance of the problem MEST in Definition~\ref{def-mest}, one can inductively complete the 
GREEDY algorithm, constructing a solution $(S,u)$. $S$ is a set of edges of $G$; initially $S_{0}=\emptyset$. At stage $r,$ given the current set of edges $S_{r-1}$ 
constructed so far and the index $i_{r}$ chosen by the GREEDY algorithm, consider an independent set $I_{r}$ of 
cardinality $f(W_{r})$ defined in equation (\ref{cyclematroid}). Employing the exchange axiom of the cycle matroid we complete 
the independent set $S_{r-1}$ to an independent set $S_{r}$ by adding $f(W_{r})-f(W_{r-1})$ 
elements from $I_{r}$. Finally, extend $u$ to $S_r$ by defining $u(e) = i_{r}$ for all $e \in S_{r} \setminus S_{r-1}$ (all such edges are adjacent to $i_r$).
\label{greedy-memc}
\end{expl}

\section{Integer Programming Formulation and a Covering Coefficient.}

The minimum entropy set cover problem 
can be formulated as an integer programming problem (Figure~\ref{secondip} (a)). 
Similarly we can express MESSC (Figure~\ref{secondip} (b), with convention $0\log(0)=0$) by a rather artificial  integer program, whose usefulness will become clearer at a later stage. In a nutshell, the program provides a simple way to define the quantities measuring the performance of the greedy algorithm in our main result.

\begin{figure}[!h]
\begin{center}
\small{
\framebox{\begin{minipage}[t]{0.96\columnwidth}%

\subfloat[MESC]{%
\begin{minipage}[t]{0.45\columnwidth}%
\begin{align}
 & min\left[\sum_{S} - x_{S} \frac{|S|}{n}\log\left(\frac{|S|}{n}\right) \right] \\ 
 & \mbox{ s.t. } \notag\\
 & \sum_{S\subseteq T} x_{S}\geq 1, \mbox{ }T\in \mathcal{P}(U) \notag\\
 & x_{S}\in \{0,1\} \notag
\end{align}
\end{minipage}
}
~
\subfloat[MESSC]{%
\begin{minipage}[t][1\totalheight]{0.45\columnwidth}%
\begin{align}
 & \min\left[\sum_{i\in [m]}\sum_{\lambda=0}^{f(\{i\})} - x_{i,\lambda}\frac{\lambda}{n}\log\left(\frac{\lambda}{n}\right) \right] \label{ip-messc}\\ 
 & \mbox{ s.t. } \notag\\
 & \sum_{\lambda=0}^{f(\{i\})} x_{i,\lambda}= 1, \mbox{ }i\in [m] \notag\\
 & \sum_{i\in U-S}\sum_{\lambda=0}^{f(\{i\})} \lambda x_{i,\lambda}\geq f(U)-f(S), \mbox{ }S \subseteq [m] \notag\\
 & x_{i,\lambda}\in \{0,1\} \notag
\end{align}
\end{minipage}}
\end{minipage}}
\caption{\textbf{Minimum Entropy Integer Programming formulations}}
\label{secondip}
}
\end{center}
\end{figure}


\begin{proposition} 
 Given an instance of the MESSC, its solutions are in one-to-one correspondence to solutions of IP problem defined in Figure~\ref{secondip} (b). 
\end{proposition}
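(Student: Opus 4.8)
The plan is to construct an explicit value-preserving bijection between the covers $g$ of $f$ and the feasible $0/1$ vectors $(x_{i,\lambda})$ of the program in Figure~\ref{secondip}(b). The first and enabling observation is that a cover is determined by its values on singletons. Since $f$ is polymatroidal we have $f(\emptyset)=0$, so the constraint $0\le g(\emptyset)\le f(\emptyset)$ forces $g(\emptyset)=0$; a modular function with $g(\emptyset)=0$ is additive, hence $g(S)=\sum_{i\in S} g(\{i\})$ for every $S\subseteq U$. Writing $a_i:=g(\{i\})$ and $n:=f(U)$, a cover is therefore nothing more than a vector $(a_i)_{i\in U}$ of nonnegative integers, and I would work with these throughout.

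Next I would define the correspondence. To a cover $g$ associate the vector with $x_{i,\lambda}=1$ iff $a_i=\lambda$. Because $0\le g(\{i\})\le f(\{i\})$ we have $a_i\in\{0,1,\dots,f(\{i\})\}$, so for each $i$ exactly one of the variables $x_{i,0},\dots,x_{i,f(\{i\})}$ equals $1$; this is precisely the first IP constraint $\sum_{\lambda}x_{i,\lambda}=1$. In the reverse direction a feasible $(x_{i,\lambda})$ selects, for each $i$, a unique exponent $a_i\in\{0,\dots,f(\{i\})\}$ and hence an additive function $g$. Thus the two families of objects are matched as collections of variables; what remains is to align the covering constraints and to check that the objective is preserved.

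The substantive step is the equivalence of the two constraint systems. Using additivity, $\sum_{i\in U\setminus S} a_i=g(U)-g(S)$, so the IP inequality indexed by $S$, namely $\sum_{i\in U\setminus S}\sum_{\lambda}\lambda\,x_{i,\lambda}\ge f(U)-f(S)$, reads $g(U)-g(S)\ge n-f(S)$. Taking $S=\emptyset$ gives $g(U)\ge n$, and once $g(U)=n$ is in force every remaining inequality becomes exactly the cover requirement $g(S)\le f(S)$. I would verify both implications carefully, passing between a constraint indexed by $S$ and the cover condition at $U\setminus S$, and checking that under this complementation no constraint is lost or spuriously created.

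Finally, with $g(U)=n$ the objective of~(\ref{ip-messc}) becomes $-\sum_{i}\frac{g(\{i\})}{g(U)}\log\frac{g(\{i\})}{g(U)}=Ent(g)$ by~(\ref{entcover}), so the bijection preserves the optimized quantity. The step I expect to be the main obstacle is exactly this constraint matching, and within it the handling of the normalization $g(U)=f(U)$: the complementation converts the upper bounds $g(S)\le f(S)$ into lower bounds on complements, and one must make sure the defining equality $g(U)=f(U)$ is faithfully captured by the inequalities rather than only the weaker bound they immediately yield. Pinning down this equivalence cleanly is the crux; the reduction to singleton values and the objective computation are then routine.
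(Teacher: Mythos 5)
Your construction is essentially identical to the paper's: the indicator encoding $x_{i,\lambda}=1$ iff $g(\{i\})=\lambda$ in one direction, $X_j=\sum_{\lambda}\lambda x_{j,\lambda}$ in the other, and the covering constraints translated by complementation using $\sum_{j\in U\setminus S}X_j = g(U)-g(S)$, exactly as you describe. The one point you flag as the crux --- that the IP constraints by themselves only yield $\sum_j X_j\ge f(U)$ (from $S=\emptyset$), so the reverse map does not obviously produce a cover --- is a genuine subtlety, but the paper's own proof is equally silent on it and simply asserts $\sum_{j\in S}X_j\le f(S)$; so your proposal matches the paper both in approach and in precisely where the details are left implicit.
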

{\bf Proof. }
 Given a solution $z=(z_{j})_{j\in [m]}$ of MESSC define $\overline{x}_{j,\lambda}$ for $j\in [m]$ and $0\leq \lambda \leq f(\{j\})$ as follows:
\[
 \overline{x}_{j,\lambda}=\left\{\begin{array}{ll} 
  1 & \mbox{if } z_{j}=\lambda \\ 
   0 & \mbox{otherwise.}
  \end{array}
\right.
\]

\begin{claim} 
 If $z$ is a solution to MESSC then $\overline{x}$ is a feasible solution to system~(\ref{ip-messc}). 
\end{claim}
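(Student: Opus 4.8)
The plan is to verify directly that $\overline{x}$ satisfies each of the three families of constraints in system~(\ref{ip-messc}). The key preliminary observation is that, since a cover $g$ in Definition~\ref{second-def} is a \emph{modular} function and the constraint $0\leq g(S)\leq f(S)$ at $S=\emptyset$ forces $g(\emptyset)=0$ (recall $f(\emptyset)=0$ by polymatroidality), $g$ is completely determined by its singleton values. Writing $z_j=g(\{j\})$, additivity on disjoint sets gives $g(S)=\sum_{j\in S} z_j$ for every $S\subseteq U$. Applying $0\leq g(S)\leq f(S)$ to $S=\{j\}$ yields $0\leq z_j\leq f(\{j\})$, so the index $\lambda=z_j$ lies in the admissible range $\{0,1,\ldots,f(\{j\})\}$ and the definition of $\overline{x}$ makes sense.

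The equality and integrality constraints are then immediate. For each fixed $i$ the value $z_i$ is a single integer in $\{0,\ldots,f(\{i\})\}$, so exactly one of the indicators $\overline{x}_{i,\lambda}$ equals $1$ and the rest vanish; hence $\sum_{\lambda=0}^{f(\{i\})}\overline{x}_{i,\lambda}=1$, and $\overline{x}_{i,\lambda}\in\{0,1\}$ by construction.

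The substance of the argument lies in the covering inequality. First I would note that the inner sum collapses: only $\lambda=z_i$ contributes, so $\sum_{\lambda}\lambda\,\overline{x}_{i,\lambda}=z_i$, and therefore the left-hand side of the inequality equals $\sum_{i\in U\setminus S} z_i=g(U\setminus S)$. Invoking modularity on the complementary pair $S$ and $U\setminus S$ (together with $g(\emptyset)=0$) gives $g(U\setminus S)=g(U)-g(S)=f(U)-g(S)$, using $g(U)=f(U)$. The required inequality $g(U\setminus S)\geq f(U)-f(S)$ thus reduces to $f(U)-g(S)\geq f(U)-f(S)$, i.e. to $g(S)\leq f(S)$ — which is precisely the defining constraint of a cover. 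This establishes feasibility of $\overline{x}$.

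There is no serious obstacle here; the only care needed is bookkeeping in two symmetric uses of modularity. Modularity first turns the summation over $\lambda$ into the singleton value $z_i$, and then rewrites the complement sum $g(U\setminus S)$ in terms of $g(S)$, at which point the cover inequality $g(S)\leq f(S)$ supplies exactly the slack the IP constraint demands. (The reverse passage, recovering a cover from an arbitrary feasible integral $x$, would run this same computation backwards and is presumably what the remainder of the Proposition's proof addresses.)
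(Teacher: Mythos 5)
Your proof is correct and follows essentially the same route as the paper: collapse the inner sum over $\lambda$ to the single value $z_i$, then derive the covering inequality from $\sum_{j\in S}z_j\leq f(S)$ together with $\sum_{j\in U}z_j=f(U)$. The only difference is that you make explicit the step the paper leaves implicit, namely that modularity and $g(\emptyset)=0$ identify the cover $g$ with the vector of its singleton values, which is a welcome clarification but not a different argument.
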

{\bf Proof.}
Since $z$ is a cover, $0\leq z_{j}\leq f(\{j\})$ for every $j\in [m]$ hence obviously  for any such $j$ there is exactly one $\lambda$, 
$0\leq \lambda\leq f(\{j\})$ with $\overline{x}_{j,\lambda}=1$. 

Thus $\displaystyle \sum \limits_{\lambda=0}^{f(\{j\})} \overline{x}_{j,\lambda}=1$ and $\displaystyle \sum \limits_{\lambda =0}^{f(\{j\})} \lambda \overline{x}_{j,\lambda}=z_{j}$.

From the definition of cover $\sum_{j\in S}z_j\leq f(S)$ and $\sum_{j\in U}z_j = f(U)$ therefore the second inequality follows.
\qed

Conversely, given a solution $w$ of ~(\ref{ip-messc}), define for $j\in [m]$ 
\begin{equation}\label{defX} 
X_{j}=\sum \limits_{\lambda=0}^{f(\{j\})} \lambda w_{j,\lambda}. 
\end{equation} 

\begin{claim} 
 $(X_{j})_{j\in [m]}$ defined above is a solution to MESSC whose entropy is equal to the value of the objective function of (\ref{ip-messc}) for $w$. 
\end{claim}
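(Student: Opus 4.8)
The plan is to verify the two assertions in the claim: that the vector $(X_{j})_{j\in[m]}$ defined by equation~(\ref{defX}) is a genuine solution to MESSC (i.e., a modular cover $g$ in the sense of Definition~\ref{second-def}), and that its entropy equals the objective value of the integer program~(\ref{ip-messc}) evaluated at $w$. Since the first claim already established one direction of the correspondence, the role of this converse claim is to close the loop and thereby prove the bijection asserted in the proposition. So first I would recall exactly what it means for $(X_{j})_j$ to be a solution: interpreting the $X_j$ as the values $g(\{j\})$ of a modular function, the feasibility conditions reduce to $g(U)=f(U)$ together with $0\le g(S)\le f(S)$ for all $S\subseteq[m]$, which for a modular $g$ amounts to checking $\sum_{j\in S}X_j\le f(S)$ for every $S$ and $\sum_{j\in U}X_j=f(U)$.

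The main verification proceeds constraint-by-constraint from the feasibility of $w$. From the first IP constraint $\sum_{\lambda=0}^{f(\{j\})}w_{j,\lambda}=1$ together with $w_{j,\lambda}\in\{0,1\}$, exactly one $\lambda$ per index $j$ is selected, so $X_j$ is a well-defined nonnegative integer with $0\le X_j\le f(\{j\})$; this gives modularity of the resulting $g$ and the pointwise bound $g(\{j\})\le f(\{j\})$. The upper-bound constraints $g(S)\le f(S)$ follow from the second IP family of inequalities: applying $\sum_{i\in U\setminus S}\sum_\lambda \lambda w_{i,\lambda}\ge f(U)-f(S)$ and rewriting the left-hand side as $\sum_{i\in U\setminus S}X_i = \bigl(\sum_{i\in U}X_i\bigr)-\sum_{i\in S}X_i$. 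To close this I need the total $\sum_{i\in U}X_i=f(U)$, which I would obtain by taking $S=U$ (the empty-complement case) in the second constraint to get $\sum_{i\in U}X_i\ge f(U)$, combined with $\sum_{i\in U}X_i=\sum_i X_i\le f(U)$ coming from the monotonicity/submodularity of $f$ applied to singletons, or directly from the $S=\emptyset$ instance which forces $\sum_{i\in U}X_i\ge f(U)$ while feasibility of $g$ caps it. Once equality $\sum_i X_i=f(U)$ is in hand, substituting back converts each second-type inequality into $\sum_{i\in S}X_i\le f(S)$, i.e. exactly $g(S)\le f(S)$.

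For the entropy equality, I would simply substitute. Because exactly one $\lambda$ is chosen per $j$, namely $\lambda=X_j$, each inner sum $\sum_{\lambda=0}^{f(\{j\})}-x_{j,\lambda}\frac{\lambda}{n}\log(\frac{\lambda}{n})$ collapses to the single surviving term $-\frac{X_j}{n}\log(\frac{X_j}{n})$, with the convention $0\log 0=0$ handling the case $X_j=0$. Recalling $n=f(U)=g(U)$, the total objective becomes $-\sum_{j\in[m]}\frac{X_j}{g(U)}\log(\frac{X_j}{g(U)})$, which is precisely $Ent(g)$ as defined in equation~(\ref{entcover}). Finally I would note that the two maps between MESSC solutions and IP solutions are mutually inverse: starting from $z$, forming $\overline{x}$, and reading off $X_j=\sum_\lambda \lambda\,\overline{x}_{j,\lambda}$ recovers $z_j$, and conversely, establishing the one-to-one correspondence claimed by the proposition.

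I expect the only genuine obstacle to be the bookkeeping that pins down $\sum_{i\in U}X_i=f(U)$ and the careful matching of the index set $U\setminus S$ in the IP against the complementary sum $g(U)-g(S)$; everything else is routine substitution. It is worth double-checking that the second IP constraint, indexed over all $S\subseteq[m]$, indeed yields the full family of cover inequalities $g(S)\le f(S)$ once the total is fixed, rather than only a subset of them, since that equivalence is what makes the correspondence exact rather than merely one-directional.
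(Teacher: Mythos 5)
Your overall route is the same as the paper's: read off $0\le X_j\le f(\{j\})$ from the fact that exactly one $w_{j,\lambda}$ equals $1$, convert the complementary constraints $\sum_{i\in U\setminus S}X_i\ge f(U)-f(S)$ into $\sum_{i\in S}X_i\le f(S)$, and collapse each inner sum of the objective to the single surviving term $-\frac{X_j}{n}\log\bigl(\frac{X_j}{n}\bigr)$. The entropy computation and the mutual-inverse remark at the end are fine and match what the paper does (the paper simply asserts these points more tersely).

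The problem is the step you yourself identify as the crux: $\sum_{i\in U}X_i=f(U)$. The $S=\emptyset$ constraint gives $\sum_i X_i\ge f(U)$, but neither of your two proposed sources for the reverse inequality works. Subadditivity of a polymatroid yields $\sum_i f(\{i\})\ge f(U)$, so the bound $X_i\le f(\{i\})$ points the wrong way; and ``feasibility of $g$ caps it'' is circular, since the feasibility of $g$ is precisely what the claim asks you to establish. In fact the IP of Figure~\ref{secondip}(b) genuinely does not impose $\sum_i X_i\le f(U)$: the assignment $x_{i,f(\{i\})}=1$ for every $i$ satisfies all constraints (the covering inequalities hold by submodularity) yet typically has $\sum_i X_i=\sum_i f(\{i\})>f(U)$, so the induced modular function is not a cover, and the derivation of $\sum_{i\in S}X_i\le f(S)$ collapses with it; the normalization by $n$ in the entropy identity is affected for the same reason. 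To close this you would need either to adjoin the constraints $\sum_{i\in S}X_i\le f(S)$ (or the single equality $\sum_i X_i=f(U)$) to the IP, or to restrict the correspondence to feasible points attaining total $f(U)$. The paper's own proof asserts $\sum_{j\in S}X_j\le f(S)$ with no more justification than you give, so the lacuna is inherited from the source, but as written your argument does not repair it.
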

{\bf Proof. }
Equation~(\ref{defX}) and system~(\ref{ip-messc}) ensure the fact that $0\leq X_{j}\leq f(\{j\})$ for any $j\in [m]$, and 
$\sum_{j\in S} X_{j}\leq f(S)$ for any $S\subseteq [m]$, hence $X$ is a cover. 

Since for any $j\in [m]$ we have $\sum \limits_{\lambda=0}^{f(\{j\})}  w_{j,\lambda} =1$, exactly one such term in the above equality is 1, and the result immediately follows. 
\qed
\newline
\qed 

We now come up to a quantity that will play a fundamental role in our results below: for any $1 \leq r \leq l$ we define

\begin{equation}
a_{r}^j = f(W_{r}) - f(W_{{r}-1}) - \left(f(W_{r} \cup \{j\}) - f(W_{{r}-1} \cup \{j\}) \right). 
\label{a-r-j}
\end{equation}

The best way to make sense of the (admittedly unintuitive) definition of the $a_{r}^{j}$ coefficients above is to particularize them in the case of the set cover problem. Observation~\ref{a-mesc} below shows that in this case coefficients $a_{r}^{j}$ have a very intuitive description: they represent the size of the intersection of the $j$'th set $P_{j}$ to the $r$'th set in the GREEDY solution. 

\begin{observation} \label{a-mesc} 
\mbox{ } Let us consider the setting in Example~\ref{expl-sc}. Then 
\[
 a_{r}^{j}=|X_{W_{r}}\setminus X_{W_{r-1}}|-|X_{W_{r}}\setminus (X_{W_{r-1}}\cup P_{j})|=|(X_{W_{r}}\setminus X_{W_{r-1}})\cap P_{j}|
\]

\end{observation}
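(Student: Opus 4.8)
The plan is to unfold the definition completely: substitute the concrete cardinality function $f(S)=|X_{S}|$, where $X_{S}=\bigcup_{i\in S}P_{i}$, into the definition~(\ref{a-r-j}) of $a_{r}^{j}$, and then perform elementary set algebra. The only structural fact I would exploit is that $X_{S}$ is monotone under inclusion, so that every union appearing in~(\ref{a-r-j}) is nested; this is what lets me replace each difference of cardinalities by the cardinality of a set difference, with no inclusion--exclusion correction needed.

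First I would handle the two differences in~(\ref{a-r-j}) separately. Since $W_{r-1}\subseteq W_{r}$ forces $X_{W_{r-1}}\subseteq X_{W_{r}}$, the first difference is immediate: $f(W_{r})-f(W_{r-1})=|X_{W_{r}}|-|X_{W_{r-1}}|=|X_{W_{r}}\setminus X_{W_{r-1}}|$. The second difference is the only place that requires a brief computation. Because $X_{W_{r-1}}\cup P_{j}\subseteq X_{W_{r}}\cup P_{j}$, I would write $f(W_{r}\cup\{j\})-f(W_{r-1}\cup\{j\})=|(X_{W_{r}}\cup P_{j})\setminus(X_{W_{r-1}}\cup P_{j})|$, and then observe that an element lies in this last set exactly when it lies in $X_{W_{r}}$, outside $X_{W_{r-1}}$, and outside $P_{j}$. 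In particular the $P_{j}$ contribution to the union cancels, and the set collapses to $X_{W_{r}}\setminus(X_{W_{r-1}}\cup P_{j})$. Subtracting the two cardinalities then yields the first claimed equality.

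For the second equality I would invoke the general identity $|A|-|A\setminus B|=|A\cap B|$ with $A=X_{W_{r}}\setminus X_{W_{r-1}}$ and $B=P_{j}$, after noting that $X_{W_{r}}\setminus(X_{W_{r-1}}\cup P_{j})=A\setminus P_{j}$. This rewrites the difference directly as $|(X_{W_{r}}\setminus X_{W_{r-1}})\cap P_{j}|$, giving the desired intersection interpretation and finishing the argument. Since the statement is essentially a definitional reformulation, there is no genuine obstacle; the only point deserving care is the cancellation of the $P_{j}$ terms in the second difference, where one must be sure to use the nesting of the unions correctly so that each cardinality difference really equals the cardinality of the corresponding set difference.
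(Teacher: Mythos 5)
Your proof is correct, and it follows exactly the route the paper intends: the paper states the observation without a separate proof, treating the displayed chain of equalities as the argument, and your unfolding of $f(S)=|X_S|$, the use of nestedness to convert cardinality differences into set differences, and the final identity $|A|-|A\setminus B|=|A\cap B|$ is precisely that computation made explicit. The one step deserving care --- the cancellation of $P_j$ in $(X_{W_r}\cup P_j)\setminus(X_{W_{r-1}}\cup P_j)=X_{W_r}\setminus(X_{W_{r-1}}\cup P_j)$ --- is handled correctly.
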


\begin{proposition}
  For any $1 \leq r \leq l$ and $1 \leq j \leq m$ we have $a_r^j \geq 0.$
\end{proposition}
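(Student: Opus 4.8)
The plan is to recognize that the (at first sight opaque) definition of $a_r^j$ is, after rearrangement, precisely a submodular inequality. First I would regroup the four terms in equation~(\ref{a-r-j}) so that the claim $a_r^j \geq 0$ becomes equivalent to
\[
f(W_r) + f(W_{r-1} \cup \{j\}) \geq f(W_r \cup \{j\}) + f(W_{r-1}).
\]
This is the natural target form, since each side now displays two sets whose union and intersection I can identify, which is exactly the data that submodularity constrains.

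Next I would apply the submodularity of $f$ to the pair $A = W_r$ and $B = W_{r-1} \cup \{j\}$. Because GREEDY builds its index set by successive single insertions, we have $W_{r-1} \subseteq W_r$, and hence $A \cup B = W_r \cup \{j\}$. Submodularity then gives
\[
f(W_r) + f(W_{r-1} \cup \{j\}) \geq f(W_r \cup \{j\}) + f(A \cap B).
\]
The only remaining point is to compare $f(A \cap B)$ with $f(W_{r-1})$. Since $W_{r-1}$ is contained both in $W_r$ and in $W_{r-1} \cup \{j\}$, we have $W_{r-1} \subseteq A \cap B$, so monotonicity of the polymatroidal function $f$ yields $f(A \cap B) \geq f(W_{r-1})$. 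Chaining the two inequalities produces exactly the target inequality, whence $a_r^j \geq 0$.

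I expect no serious obstacle here; the argument is a direct invocation of the polymatroid axioms (monotonicity together with submodularity). The only place that calls for a little care is the intersection term $A \cap B$: one should \emph{not} assume $j \notin W_r$ (which would force $A \cap B = W_{r-1}$ exactly), but instead use monotonicity to bound $f(A \cap B)$ from below by $f(W_{r-1})$ regardless of whether $j$ already lies in $W_r$. This single observation lets the statement be proved uniformly and avoids a case analysis on the position of $j$ relative to the greedy set $W_r$.
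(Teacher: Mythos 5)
Your proof is correct and rests on the same key step as the paper's: submodularity of $f$ applied to the pair $S=W_r$ and $T=W_{r-1}\cup\{j\}$. The only difference is presentational — the paper splits into the cases $j\in W_r$ (handled by monotonicity alone) and $j\notin W_r$ (where the intersection is exactly $W_{r-1}$), whereas you absorb both into a single uniform argument by bounding $f\bigl(W_r\cap(W_{r-1}\cup\{j\})\bigr)\geq f(W_{r-1})$ via monotonicity, which is a slightly cleaner way to avoid the case analysis.
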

When $j \in W_{r}$ this follows directly from the monotonicity of function $f.$
Assume now that $j \notin W_{r},$ employ the submodularity of function $f$ with $S = W_{r}$ and $T = W_{r-1} \cup \{j\}.$
\qed

\begin{figure}[!h] 
\begin{center}
\noindent \fbox{
\begin{minipage}[p]{0.80\columnwidth}%
\begin{align}
 & \min[\sum_{j\in [m]}\sum_{\lambda=1}^{f(\{j\})} - \frac{\lambda}{n}\log\left(\frac{\lambda}{n}\right) x_{j,\lambda}]\label{ipmessc2}& \\ 
 & \mbox{ s.t. } \notag\\
 & \sum_{\lambda=0}^{f(\{j\})} x_{j,\lambda}= 1, \mbox{ }j\in [m] \notag & \sum_{i \in U-S} X_{i}\geq f(U)-f(S), \mbox{ }S \subseteq [m] \notag \\
 & \sum_{\lambda=0}^{f(\{j\})} \lambda x_{j,\lambda}= X_{j}, \mbox{ } j\in [m] \notag  &  X_{j}= \sum_{r=1}^l Z_{r}^j, \forall j \in [m] \notag \\
 & x_{i,\lambda} \in \{0,1\} \notag & 0 \leq Z_r^j \leq a_r^j, Z_r^j\in {\bf Z} \notag
\end{align}
\end{minipage}} 
\end{center} 
\caption{Redundant IP formulation of MESSC} 
\label{redundant-ip}
\end{figure}

We will find it useful to introduce a large number of apparently superfluous variables in system~(\ref{ip-messc}) as 
presented in Figure~\ref{redundant-ip}. Intuitively $Z_{r}^j$ is the portion of optimal solution $X_{j}$ that can be 
assigned to cover ``the $r$'th set in the greedy solution''. This explains the newly introduced constraints: first, one has to allocate all of $X_{j}$ and no more than that. Second, one cannot allocate to any ``set $i_{r}$'' more than ``its intersection with $X_{j}$''. The quoted statements above make full sense, of course, only for regular set cover (Example~\ref{expl-sc})


To state the main result we need: 

\begin{definition} \label{Z_alpha_delta}
Given polymatroid $G$, let $\alpha= \alpha_G$ the smallest positive value such that there exists an optimal solution of system~(\ref{ip-messc}) 
that can be completed to a solution of system~(\ref{ipmessc2}) by defining $Z_{r}^{j}$ so that inequalities 
\begin{equation}
 \sum_{j=1}^{m} Z_{r}^{j} \leq \alpha \cdot \Delta_{r}^{GREEDY} 
\label{defalpha}
\end{equation}
hold for any $1\leq r\leq l$. 
\end{definition}

Given the discussion above, intuitively $\alpha$ is a covering coefficient. 
It measures the amount of inherent ``redundancy'' into coverings of the GREEDY solution by pieces obtained by breaking up the optimal solution. In this sense, measure $\alpha$ has a somewhat similar flavor to the {\em curvature of a submodular cost function} defined by Wan et al. in \cite{wan2010greedy}. Of course, there are some differences as well: the latter measure is relevant for {\em minimum cost} submodular optimization. It also does not directly involve the GREEDY solution (whereas, in the interest of tightness, our concept does).

\begin{proposition}\label{alpha-one}
The coefficient $\alpha$ that satisfies system~(\ref{defalpha}) is always 
greater or equal to  $1$. 
\end{proposition}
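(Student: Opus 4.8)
The plan is to exploit a simple global conservation identity: both the total of the greedy increments and the total mass of any feasible decomposition $Z_r^j$ of an optimal cover equal exactly $f(U)$. Once this is in hand, summing the defining inequality (\ref{defalpha}) over all rows $r$ forces $\alpha \geq 1$, with essentially no further work.

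First I would observe that the greedy increments telescope. By definition $\Delta_r^{GREEDY} = f(W_r) - f(W_{r-1})$, so $\sum_{r=1}^{l} \Delta_r^{GREEDY} = f(W_l) - f(W_0) = f(U)$, using $f(\emptyset) = 0$ (polymatroidality) together with the fact that the greedy output $W_l$ is feasible, i.e. $f(W_l) = f(U) = N$. Next I would compute the total mass of the $Z_r^j$ variables. Exchanging the order of summation and invoking the linking constraint $X_j = \sum_{r=1}^{l} Z_r^j$ of system (\ref{ipmessc2}), one gets $\sum_{r=1}^{l}\sum_{j=1}^{m} Z_r^j = \sum_{j=1}^{m} X_j$. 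Since $X$ is (the vector of singleton values of) a cover, the associated modular function $g$ satisfies $g(\emptyset) = 0$ — forced by $0 \leq g(\emptyset) \leq f(\emptyset) = 0$ — and is additive, whence $\sum_{j=1}^{m} X_j = g(U) = f(U)$.

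Combining the two computations finishes the argument: summing (\ref{defalpha}) over $r = 1, \dots, l$ yields $f(U) = \sum_{r,j} Z_r^j \leq \alpha \sum_{r} \Delta_r^{GREEDY} = \alpha\, f(U)$. Assuming the instance is nontrivial, so that $f(U) = N > 0$, dividing through by $f(U)$ gives $\alpha \geq 1$, as claimed.

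I do not expect a genuine obstacle here; this is a pure double-counting and telescoping argument, and the structure of the two systems does all the heavy lifting. The only points deserving a moment's care are (i) confirming that $\alpha$ is well defined at all, i.e. that at least one feasible assignment of the $Z_r^j$ exists (this is implicit in Definition~\ref{Z_alpha_delta}, and can be certified by greedily distributing each $X_j$ across the rows subject to the caps $a_r^j$), and (ii) disposing of the degenerate case $f(U) = 0$, where inequality (\ref{defalpha}) is vacuous and one may simply take $\alpha = 1$ by convention.
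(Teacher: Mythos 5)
Your proof is correct and follows essentially the same route as the paper: sum inequality~(\ref{defalpha}) over $r$, evaluate the left side as $\sum_j X_j = N$ via the linking constraint, and bound the right side by $\alpha N$ using the telescoping of the greedy increments. The extra remarks on well-definedness of $\alpha$ and the degenerate case $N=0$ are fine but not needed beyond what the paper already does.
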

{\bf Proof.} 
Sum all equations~(\ref{defalpha}) for all $r=1,\ldots, l$. 

The left-hand side is 
\begin{equation} 
 \sum_{r=1}^{l} \left(\sum_{j=1}^{m} Z_{r}^{j}\right) = \sum_{j=1}^{m} \left(\sum_{r=1}^{l} Z_{r}^{j} \right) = \sum_{j=1}^{m} X_{j}= N.
\end{equation}

On the other hand the right-hand side is 
\begin{equation} 
 \alpha \cdot \sum_{r=1}^{l} \Delta_{r}^{GREEDY} \leq \alpha\cdot N,
\end{equation}

by the GREEDY algorithm. 
The result follows. 
\qed

\section{Main result.}

In this section we state and prove our main result

\begin{theorem}\label{thm-main} 
 Given a polymatroid $G=(U,f)$, the greedy algorithm produces  a solution $f_{GREEDY}$ to the instance $G$ of MESSC related to the optimal cover $f_{OPT}$ by relation:  

\begin{equation} 
 Ent[f_{GREEDY}]\leq \frac{1}{\alpha}\cdot [Ent[f_{OPT}]+\log_{2}(e)]+[1-\frac{1}{\alpha}]\log_{2}(n)
\end{equation} 
\end{theorem}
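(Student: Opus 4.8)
The plan is to reduce the stated entropy inequality to a purely algebraic statement about the greedy increments $\Delta_{r}:=\Delta_{r}^{GREEDY}$ and the optimal masses $X_{j}$ (the components of $f_{OPT}$), and then to prove that statement using the redundant decomposition $(Z_{r}^{j})$ of Figure~\ref{redundant-ip} together with a single structural property of the greedy choice. Writing $n=N=f(U)$ and using the telescoping identities $\sum_{r}\Delta_{r}=f(W_{l})=n$ and $\sum_{j}X_{j}=n$, both entropies take the form $Ent[f_{GREEDY}]=\log_{2}n-\frac{1}{n}\sum_{r}\Delta_{r}\log_{2}\Delta_{r}$ and $Ent[f_{OPT}]=\log_{2}n-\frac{1}{n}\sum_{j}X_{j}\log_{2}X_{j}$. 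Substituting these into the claim and clearing denominators, I would show that the theorem is \emph{equivalent} to the single reduction target
\[
\sum_{j=1}^{m} X_{j}\log_{2}X_{j} - n\log_{2}e \;\le\; \alpha\sum_{r=1}^{l}\Delta_{r}\log_{2}\Delta_{r}.
\]
The entire proof then consists of establishing this inequality.

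The crucial ingredient is a suffix-sum bound tying the coefficients $a_{r}^{j}$ to the greedy increments. First I would note that the definition~(\ref{a-r-j}) telescopes: summing over $s\ge r$ and using $f(W_{l})=f(W_{l}\cup\{j\})=n$ yields
\[
\sum_{s\ge r} a_{s}^{j} = f(W_{r-1}\cup\{j\}) - f(W_{r-1}) \le \Delta_{r},
\]
where the last inequality holds because GREEDY selects $i_{r}$ to maximize the marginal gain at step $r$. Introducing the suffix sums $T_{r}^{j}:=\sum_{s\ge r} Z_{s}^{j}$ and using $Z_{s}^{j}\le a_{s}^{j}$, this gives the key bound $T_{r}^{j}\le\Delta_{r}$ for all $r,j$. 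I expect this to be the main conceptual obstacle: recognizing the telescoping and coupling it with greedy optimality to obtain $T_{r}^{j}\le\Delta_{r}$ is the heart of the matter, while the rest is convexity bookkeeping.

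With $T_{r}^{j}\le\Delta_{r}$ in hand I would prove the reduction target one index $j$ at a time. Since $X_{j}=T_{1}^{j}$, $Z_{r}^{j}=T_{r}^{j}-T_{r+1}^{j}$ (with $T_{l+1}^{j}=0$), I would use the antiderivative identity $t\log_{2}t=\int_{0}^{t}(\log_{2}u+\log_{2}e)\,du$ and split the integral over the intervals $[T_{r+1}^{j},T_{r}^{j}]$; on each such interval $\log_{2}u\le\log_{2}T_{r}^{j}\le\log_{2}\Delta_{r}$, which produces
\[
X_{j}\log_{2}X_{j}\le \sum_{r} Z_{r}^{j}\log_{2}\Delta_{r} + X_{j}\log_{2}e.
\]
Summing over $j$ gives $\sum_{j}X_{j}\log_{2}X_{j}\le\sum_{r}\big(\sum_{j}Z_{r}^{j}\big)\log_{2}\Delta_{r}+n\log_{2}e$. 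Finally, since the polymatroid is integer-valued we have $\Delta_{r}\ge 1$, hence $\log_{2}\Delta_{r}\ge 0$, so the defining inequality~(\ref{defalpha}) of $\alpha$, namely $\sum_{j}Z_{r}^{j}\le\alpha\Delta_{r}$, can be applied termwise to replace $\sum_{j}Z_{r}^{j}$ by $\alpha\Delta_{r}$. This yields exactly the reduction target, and unwinding the equivalence of the first paragraph completes the proof.
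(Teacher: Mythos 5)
Your proposal is correct and follows essentially the same route as the paper: the same redundant decomposition $(Z_{r}^{j})$, the same telescoping of the $a_{r}^{j}$ combined with the greedy choice to get the key suffix bound $\sum_{s\ge r}Z_{s}^{j}\le\Delta_{r}$, and the same use of $\sum_{j}Z_{r}^{j}\le\alpha\Delta_{r}$. The only (harmless) difference is the final bookkeeping: the paper packages the estimate multiplicatively as $\prod_{r}y_{r}^{\alpha y_{r}}\ge\prod_{j}X_{j}!$ followed by $X!\ge(X/e)^{X}$, whereas you obtain the equivalent additive bound directly via the integral comparison for $t\log_{2}t$, bypassing factorials.
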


{\bf Proof. } 

Let $(X_j)_{j\in [m]}$ an optimal solution of the system from Figure (\ref{redundant-ip}) and $(y_i)_{i \in [m]}$ the solution generated by the greedy algorithm.

By the greedy choice we infer
 \[ y_{r} = f(W_{r-1}\cup \{i_r\}) - f(W_{r-1})\]
for any $1 \leq r \leq l,$ with $y_{i} = 0$ for other $i.$

We first prove several auxiliary results:

\begin{claim}\label{sumA_l_j}
For any $j \in [m]$ we have
\[
\sum_{r=1}^{l} a_{r}^j = f(\{j\}) 
\]
\end{claim}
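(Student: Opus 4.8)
The plan is to exploit the fact that the definition of $a_r^j$ in equation~(\ref{a-r-j}) is a difference of two consecutive-index terms, so that summing over $r$ produces two telescoping sums. First I would substitute the definition directly into the sum and split it as
\[
\sum_{r=1}^{l} a_r^j = \sum_{r=1}^{l}\bigl(f(W_r)-f(W_{r-1})\bigr) \;-\; \sum_{r=1}^{l}\bigl(f(W_r\cup\{j\})-f(W_{r-1}\cup\{j\})\bigr).
\]
Each of the two sums collapses to a difference of its endpoint values.

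For the first sum, telescoping leaves $f(W_l)-f(W_0)$. Here I would use that $W_0=\emptyset$ together with the polymatroidal assumption $f(\emptyset)=0$, and that $W_l$ is the full GREEDY output, which is feasible, so $f(W_l)=f(U)=N$. Hence the first sum equals $N$. For the second sum, telescoping leaves $f(W_l\cup\{j\})-f(W_0\cup\{j\})=f(W_l\cup\{j\})-f(\{j\})$. To evaluate $f(W_l\cup\{j\})$ I would invoke monotonicity: since $W_l\subseteq W_l\cup\{j\}\subseteq U$ we have $N=f(W_l)\leq f(W_l\cup\{j\})\leq f(U)=N$, so $f(W_l\cup\{j\})=N$ as well. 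Thus the second sum equals $N-f(\{j\})$.

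Subtracting, the two copies of $N$ cancel and I am left with $\sum_{r=1}^{l} a_r^j = N-(N-f(\{j\})) = f(\{j\})$, as claimed.

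There is no real obstacle here: the only subtlety is making sure the $\cup\{j\}$ term is handled correctly at both endpoints, i.e. that $f(W_l\cup\{j\})$ has saturated to $N$ (from monotonicity plus feasibility of the GREEDY solution) and that the lower endpoint contributes exactly $f(\{j\})$ (from $W_0=\emptyset$). Everything else is the mechanical collapse of the telescoping sums.
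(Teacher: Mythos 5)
Your proof is correct and follows essentially the same route as the paper: substitute the definition of $a_r^j$, telescope both sums, and evaluate the endpoints using $f(W_0)=f(\emptyset)=0$ and $f(W_l\cup\{j\})=f(W_l)=f(U)$. The only difference is that you spell out the monotonicity sandwich justifying $f(W_l\cup\{j\})=N$, which the paper simply asserts.
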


{\bf Proof. }
By the definition of $a_{r}^j$:
\begin{eqnarray*}
 \sum_{r=1}^{l} a_{r}^j  & = & \sum_{r=1}^{l} \left( f(W_{r}) - f(W_{{r}-1}) - \left(f(W_{r} \cup \{j\}) - f(W_{{r}-1} \cup \{j\}) \right) \right) \\
 & = & f(W_{l}) - f(W_{0}) - \left(f(W_{l} \cup \{j\}) - f(W_{0} \cup \{j\}) \right)  \\
 & = & n - (n - f(\{j\}))  = f(\{j\}).
\end{eqnarray*}

The computations are justified by equalities $f(W_{0}) = 0$ and $f(W_{l} \cup \{j\} = f(W_{l}).$
\qed

On the other hand, we have:
\begin{claim}\label{sumA_r_j}
 For any $1 \leq r \leq l$ and $j \in [m]$ we have:
\[
f(\{j\}) - \sum_{k=1}^{r} a_{k}^j = f(W_{r} \cup \{j\}) - f(W_{r}).
\]
\end{claim}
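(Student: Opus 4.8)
The plan is to prove Claim~\ref{sumA_r_j} by the same telescoping argument used for Claim~\ref{sumA_l_j}, but truncating the sum at index $r$ rather than summing all the way to $l$. The key observation is that the quantity $a_k^j$ is defined precisely as a difference of consecutive terms in two telescoping sequences, so partial sums collapse cleanly.

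First I would write out the partial sum explicitly using the definition in equation~(\ref{a-r-j}):
\begin{eqnarray*}
\sum_{k=1}^{r} a_{k}^j & = & \sum_{k=1}^{r} \left( f(W_{k}) - f(W_{k-1}) - \left(f(W_{k} \cup \{j\}) - f(W_{k-1} \cup \{j\}) \right) \right) \\
 & = & \left( f(W_{r}) - f(W_{0}) \right) - \left( f(W_{r} \cup \{j\}) - f(W_{0} \cup \{j\}) \right).
\end{eqnarray*}
Here the first telescoping sequence collapses to $f(W_{r}) - f(W_{0})$ and the second to $f(W_{r} \cup \{j\}) - f(W_{0} \cup \{j\})$. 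Using $f(W_{0}) = f(\emptyset) = 0$ (from the polymatroidal assumption) and $W_{0} \cup \{j\} = \{j\}$, this simplifies to $f(W_{r}) - f(W_{r} \cup \{j\}) + f(\{j\})$.

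Finally I would rearrange to isolate the target expression:
\[
f(\{j\}) - \sum_{k=1}^{r} a_{k}^j = f(W_{r} \cup \{j\}) - f(W_{r}),
\]
which is exactly the claimed identity. As a consistency check, setting $r = l$ recovers Claim~\ref{sumA_l_j}, since $f(W_{l} \cup \{j\}) = f(W_{l})$ once $W_{l}$ is feasible, forcing the right-hand side to vanish and $\sum_{k=1}^{l} a_k^j = f(\{j\})$.

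I expect no genuine obstacle here: the statement is a pure telescoping identity relying only on the definition of $a_k^j$ and the single boundary fact $f(\emptyset)=0$, both already available. The only point requiring mild care is the bookkeeping of the two interleaved telescoping sums and ensuring that the boundary term $W_0 \cup \{j\} = \{j\}$ is handled correctly rather than dropped as in the $r=l$ case; unlike Claim~\ref{sumA_l_j}, here the term $f(W_r \cup \{j\}) - f(W_r)$ does not vanish for general $r < l$, so it must be retained rather than simplified away.
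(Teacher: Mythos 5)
Your proof is correct and follows essentially the same route as the paper: both split the sum of $a_k^j$ into two telescoping sums, collapse them to $f(W_r)-f(W_0)$ and $f(W_r\cup\{j\})-f(W_0\cup\{j\})$, and use $f(W_0)=f(\emptyset)=0$ together with $W_0\cup\{j\}=\{j\}$ to obtain the identity. Your added consistency check against Claim~\ref{sumA_l_j} at $r=l$ is a nice touch but not a substantive difference.
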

{\bf Proof. }
\begin{eqnarray*}
 \sum_{k=1}^{r} a_{k}^j  & = & \sum_{k=1}^{r} \left( f(W_{k}) - f(W_{{k}-1}) - \left(f(W_{k} \cup \{j\}) - f(W_{{k}-1} \cup \{j\}) \right) \right) \\
 & = & \sum_{k=1}^{r} \left( f(W_{k}) - f(W_{{k}-1}) \right) - \sum_{k=1}^{r} \left( f(W_{k} \cup \{j\}) - f(W_{{k}-1} \cup \{j\}) \right) \\
 & = & f(W_{r}) - f(W_{0}) - \left(f(W_{r} \cup \{j\}) - f(W_{0} \cup \{j\}) \right)  \\
 & = & f(W_{r}) - f(W_{r} \cup \{j\}) + f(\{j\}). 
\end{eqnarray*}
\qed

\begin{claim}\label{ProdFact} 
Given coefficient $\alpha$ from the definition (\ref{Z_alpha_delta}) we have 
\[
\prod_{r=1}^{l} y_{r}^{\alpha y_{r}} \geq \prod_{j=1}^{m} X_j!
\]
\end{claim}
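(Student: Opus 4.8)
The plan is to read the right-hand side $\prod_{j} X_j!$ as a product of the positive integers $1,2,\ldots,X_j$ making up each factorial, and to \emph{charge} each such factor to one of the greedy steps, using the flow variables $Z_r^j$ as a routing scheme. Concretely, since $\sum_{r=1}^{l} Z_r^j = X_j$, for each fixed $j$ I would distribute the $X_j$ factors of $X_j!$ among the steps so that step $r$ receives exactly $Z_r^j$ of them, assigning the \emph{largest} factors to the \emph{earliest} steps. The goal is to show that every factor charged to step $r$ is at most $y_r$; granting this, the claim follows at once, since the number of factors landing on step $r$ is $\sum_{j=1}^{m} Z_r^j \le \alpha\,y_r$ by~(\ref{defalpha}), so the product of the factors at step $r$ is at most $y_r^{\sum_j Z_r^j} \le y_r^{\alpha y_r}$ (each $y_r$ is a positive integer, so raising the exponent only increases the value), and multiplying over $r$ yields $\prod_j X_j! \le \prod_r y_r^{\alpha y_r}$.

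The heart of the argument is the per-step bound. First I would record the greedy inequality: since at step $r$ the algorithm maximizes the marginal gain, $y_r = f(W_r)-f(W_{r-1}) \ge f(W_{r-1}\cup\{j\})-f(W_{r-1})$ for every $j$, and by Claim~\ref{sumA_r_j} (applied at index $r-1$) the right-hand side equals $f(\{j\}) - \sum_{k=1}^{r-1} a_k^j$. Hence $y_r \ge f(\{j\}) - \sum_{k<r} a_k^j$. In the largest-factors-first assignment, the number of factors of $X_j!$ already placed before step $r$ is $\sum_{k<r} Z_k^j$, so the largest factor still available, and thus the largest one charged to step $r$, equals $X_j - \sum_{k<r} Z_k^j$. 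It therefore suffices to verify $X_j - \sum_{k<r} Z_k^j \le f(\{j\}) - \sum_{k<r} a_k^j$. Using $X_j = \sum_{k=1}^{l} Z_k^j$ together with $f(\{j\}) = \sum_{k=1}^{l} a_k^j$ (Claim~\ref{sumA_l_j}), this rearranges to $\sum_{k\ge r}(a_k^j - Z_k^j) \ge 0$, which holds term by term because $Z_k^j \le a_k^j$ under the constraints of system~(\ref{ipmessc2}).

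I expect the main obstacle to be pinning down the correct charging scheme. The naive assignment of the factors $1,\ldots,X_j$ in increasing order to increasing steps fails, because it would force the bound $\sum_{k\le r} Z_k^j \le y_r$, which is false in general. Reversing the order, sending the largest factors to the earliest (hence highest-gain, since $y_1 \ge y_2 \ge \cdots$ by submodularity) steps, is precisely what turns the requirement into the telescoping inequality $\sum_{k\ge r}(a_k^j - Z_k^j)\ge 0$ above. Once this reversal is in place the remaining steps are routine bookkeeping, and I would close by noting that every $y_r \ge 1$ (GREEDY adds an element only while $f(S)<f(U)$, so each marginal gain is a positive integer), which legitimizes passing from the exponent $\sum_j Z_r^j$ to $\alpha y_r$.
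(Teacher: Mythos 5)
Your proposal is correct and is essentially the paper's own argument: the chain $y_r \ge f(\{j\})-\sum_{k<r}a_k^j = \sum_{k\ge r}a_k^j \ge \sum_{k\ge r}Z_k^j = X_j-\sum_{k<r}Z_k^j$, followed by grouping the factors of $X_j!$ into blocks of size $Z_r^j$ bounded by $X_j-\sum_{k<r}Z_k^j$, and finally passing from the exponent $\sum_j Z_r^j$ to $\alpha y_r$ via~(\ref{defalpha}) and $y_r\ge 1$, is exactly what the paper does (your "largest factors to earliest steps" charging is the explicit combinatorial reading of the paper's inequality $\prod_r (X_j-\sum_{k<r}Z_k^j)^{Z_r^j}\ge X_j!$). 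The only inessential extra is your appeal to $y_1\ge y_2\ge\cdots$, which is true but not needed once the per-step bound is in hand.
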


{\bf Proof. }
By the greedy choice, Claim (\ref{sumA_r_j}) and Claim (\ref{sumA_l_j})
\begin{eqnarray*}
y_{r}  & = & f(W_{r-1}\cup \{i_r\}) - f(W_{r-1}) \geq f(W_{r-1}\cup \{j\}) - f(W_{r-1}) \\
& = & f(\{j\}) -  \sum_{k=1}^{r-1} a_{k}^j  = \sum_{k=r}^{l} a_{k}^j 
\end{eqnarray*}
By the system in Figure (\ref{redundant-ip}) we have $Z_{r}^j \leq a_{r}^j$ and we obtain:
\[
 y_{r} \geq \sum_{k=r}^{l} Z_{k}^j = X_j - \sum_{k=1}^{r-1} Z_{k}^j
\]
Therefore,
\[
 \prod_{j=1}^{m}\left(\prod_{r=1}^{l} \left(y_{r}\right)^{Z_{k}^j} \right) \geq \prod_{j=1}^{m}  
\left( \prod_{r=1}^{l} \left(X_j - \sum_{k=1}^{r-1} Z_{k}^j \right)^{Z_{k}^j} \right) \ \geq \prod_{j=1}^{m} \left(X_j\right)!
\]

On the other hand, by Definition (\ref{Z_alpha_delta})
\[
 \prod_{j=1}^{m}\left(\prod_{r=1}^{l} \left(y_{r}\right)^{Z_{k}^j} \right)  = \prod_{r=1}^{l} \left(y_{r}\right)^{\sum_{j=1}^{m} Z_{k}^j} 
\leq \prod_{r=1}^{l} \left(y_{r}\right)^{\alpha y_{r}}.
\]
\qed

With Claim (\ref{ProdFact}) in hand, we get 
\begin{eqnarray*}
 ENT[f_{GREEDY}] & = & - \sum_{r=1}^{l}\frac{y_{r}}{n}\log_{2} \left(\frac{y_{r}}{n} \right)  =  - \sum_{r=1}^{l}\frac{y_{r}}{n}\log_{2} (y_{r}) +\log_{2}(n) \\
 & = & - \frac{1}{n} \frac{1}{\alpha} \sum_{r=1}^{l} \log_{2} y_{r}^{\alpha y_{r}} +\log_{2} n  =  - \frac{1}{n} \frac{1}{\alpha} \log_{2} \prod_{r=1}^{l}  y_{r}^{\alpha y_{r}} +\log_{2} n  \\
 & \leq & - \frac{1}{n} \frac{1}{\alpha} \log_{2} \prod_{j \in OPT}  X_j! +\log_{2}n 
\end{eqnarray*}

Using inequality $n! \geq \left(\frac{n}{e}\right)^{n}:$
\begin{eqnarray*}
  ENT[f_{GREEDY}] & \leq & - \frac{1}{n} \frac{1}{\alpha} \log_{2} \prod_{j \in OPT}  \left(\frac{X_j}{e}\right)^{X_j}  +\log_{2} n \\
& = & - \frac{1}{n} \frac{1}{\alpha} \sum_{j \in OPT} X_j \left( \log_{2} X_j - \log_{2} e \right)  +\log_{2}n \\
& = & \frac{1}{\alpha} \left(-\sum_{j \in OPT} \frac{X_j}{n} \log_{2} X_j \right) + \frac{1}{\alpha} \log_{2} e + \log_{2}n \\
& = & \frac{1}{\alpha} \left(-\sum_{j \in OPT} \frac{X_j}{n} \log_{2} \frac{X_j}{n} - \log_{2} n \right) + \frac{1}{\alpha} \log_{2} e + \log_{2}n \\
& = & \frac{1}{\alpha} \left( ENT[f_{OPT}] + \log_{2} e \right) + \left(1 - \frac{1}{\alpha} \right) \log_{2} n
\end{eqnarray*}

\qed

\section{Applications: Minimum Entropy Set Cover, Minimum Entropy Orientation.} 

A simple problem where one can determine the value of $\alpha$ is the Minimum Entropy Orientation 
(MEO) problem \cite{cardinal2008minimum,cardinal2009minimum}. Indeed, we recover (using a different method) the upper bound on the performance  
of the GREEDY algorithm for MEO (an algorithm that is, however, not optimal \cite{cardinal2008minimum}). The result will be generalized next to problem MESC (with an even  simpler proof). We have chosen to include it here, though, in this form as the proof  is going to be useful in the analysis of problem MEST. 


\begin{definition} {\bf [MIN-ENTROPY ORIENTATION (MEO)]:}
\begin{enumerate}
\item\noindent[GIVEN:] A graph $G=(V,E)$. 
\item\noindent[SOLUTIONS: ]
An {\em orientation of $G$} is a function $u:E\rightarrow V$ such that for all $e\in E$, $u(e)$ is one of the vertices of edge $e$.
\item\noindent[OBJECTIVE:] To find an orientation $u$ of $G$ that minimizes 
\[
Ent(S;u)=-\sum_{i\in [V} \frac{|u^{-1}(\{i\}|}{|E|}\log_{2}\left[\frac{|u^{-1}(\{i\}|}{|E|}\right].
\]
\end{enumerate} 
\end{definition} 

The MEO problem is a special case of MESC with sets corresponding to vertices and elements to edges adjacent to the given vertex.  Each instance $G=(V,E)$ of MEO can be regarded as an instance of MESC with submodular cost function 
\[
c(S)=|\{e\in E: E\cap S\neq \emptyset\}|.
\]

\begin{proposition} 
 For any instance $G$ of MEO $\alpha_G = 1$.
\end{proposition}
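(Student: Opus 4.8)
The plan is to prove $\alpha_G \leq 1$, since Proposition~\ref{alpha-one} already guarantees $\alpha_G \geq 1$. By Definition~\ref{Z_alpha_delta}, it suffices to produce a single optimal solution of the program~(\ref{ip-messc}) together with a completion $(Z_r^j)$ satisfying $\sum_{j=1}^{m} Z_r^j \leq \Delta_r^{GREEDY}$ for every $1 \leq r \leq l$.

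First I would fix an optimal orientation $u^{*}$ of $G$ and set $X_j = |(u^{*})^{-1}(j)|$, the number of edges assigned to vertex $j$; this is the optimal solution of~(\ref{ip-messc}) that I will complete. Next, I recall from Observation~\ref{a-mesc} that in the set-cover encoding of MEO the coefficient $a_r^j$ equals $|(X_{W_r} \setminus X_{W_{r-1}}) \cap P_j|$, i.e. the number of edges incident to $j$ that the greedy solution first covers at step $r$ (here $P_j$ denotes the set of edges incident to $j$).

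The heart of the argument is the choice of the $Z_r^j$. I would define $Z_r^j$ to be the number of edges $e$ with $u^{*}(e) = j$ that lie in the greedy layer $X_{W_r} \setminus X_{W_{r-1}}$. Three verifications then close the proof, each resting on two elementary facts specific to MEO: every edge is assigned by $u^{*}$ to exactly one of its two endpoints, and every edge belongs to exactly one greedy layer. Since $u^{*}(e) = j$ forces $e \in P_j$, the edges counted by $Z_r^j$ form a subset of those counted by $a_r^j$, giving $0 \leq Z_r^j \leq a_r^j$; summing over layers, each edge assigned to $j$ lies in a unique layer, so $\sum_{r=1}^{l} Z_r^j = X_j$; and summing over vertices, each layer-$r$ edge is assigned to exactly one vertex, whence $\sum_{j=1}^{m} Z_r^j = |X_{W_r} \setminus X_{W_{r-1}}| = f(W_r) - f(W_{r-1}) = \Delta_r^{GREEDY}$.

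This last identity yields~(\ref{defalpha}) with $\alpha = 1$ (in fact with equality), so $\alpha_G \leq 1$ and therefore $\alpha_G = 1$. I do not anticipate a genuine obstacle: the one point to keep in mind is that the greedy layer of an edge is determined by its first greedy-covered endpoint, which need not be the endpoint $u^{*}$ assigns it to. This mismatch is harmless, because the layer index only serves to place each edge in a single bucket while the vertex index records its optimal assignment, and the two relevant constraints of~(\ref{ipmessc2}) decouple along these two independent directions.
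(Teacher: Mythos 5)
Your proof is correct. The verification is sound: with $Z_r^j$ defined as the number of edges in the $r$-th greedy layer that the optimal orientation $u^{*}$ assigns to vertex $j$, the three constraints follow from the two partition facts you isolate (each edge lies in exactly one greedy layer, and is assigned to exactly one endpoint), and the resulting equality $\sum_j Z_r^j = \Delta_r^{GREEDY}$ gives $\alpha_G \leq 1$, with Proposition~\ref{alpha-one} supplying the reverse inequality. However, this is not the route the paper takes for this particular proposition: the paper's MEO proof is phrased dynamically, as a stage-by-stage reorientation procedure that transforms the optimal orientation into the greedy one (setting $Z_r^j = 1$ for each edge $(j,i_r)$ reoriented at stage $r$ and $Z_r^{i_r}$ equal to the number of edges agreeing in both orientations), together with the observation that no edge is reoriented twice. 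Your static counting argument is instead essentially the paper's proof of the \emph{next} proposition ($\alpha_G = 1$ for all of MESC, via $Z_r^j = |U_j \cap (X_{W_r}\setminus X_{W_{r-1}})|$), specialized to the MEO encoding; the two constructions produce the same values of $Z_r^j$. What your version buys is brevity and immediate generality (MEO becomes a corollary of the MESC statement); what the paper's version buys is the explicit reorientation/flow picture, which it states it includes precisely because that dynamic viewpoint is reused in the multi-level flow analysis of the spanning tree problem.
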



{\bf Proof.} A simple application of formula~(\ref{a-r-j}) yields
\begin{equation} 
 a_{r}^{j}=\left\{\begin{array}{l}
1,\mbox{ if } i_{r}\neq j, (i_{r},j)\in E, j\not\in W_{r} \\
\Delta_{r}^{GREEDY},\mbox{ if }i_{r}=j\\
0,\mbox{ otherwise.}\\
\end{array}
\right.
\end{equation}

This choice allows us to turn an orientation of minimum entropy (corresponding to 
an optimal solution $(X_{i})_{i\in [m]}$) into the greedy orientation as follows: 
\begin{itemize} 
 \item At each stage $r$, after choice of $i_{r}$ we reorient edges $(j,i_{r})$, $j\not \in W_{r}$ that have different orientations in the optimal and greedy solution. Correspondingly define $Z_{r}^{j}=1$ for such edges. 
\item Also let $Z_{r}^{i_{r}}$ be the number of edges $(j,i_{r})$ that are oriented towards $i_{r}$ in both the 
greedy and the optimal orientation. Note that there are at most $a_{r}^{i_{r}}=\Delta_{r}^{GREEDY}$ such edges.  
\item Note that an edge that is reoriented at stage $r$ is not reoriented again at a later stage (because of the 
restriction $j\not \in W_{r}$). Hence the process ends up with the greedy solution. In other words
\[
 \sum_{j} Z_{r}^{j}= \Delta_{r}^{GREEDY}.
\]
(as we add one unit for each edge counted by $\Delta_{r}^{GREEDY}$). 
\end{itemize}
\qed

We generalize the result above as follows: 

\begin{proposition}
 For any instance $G$ of MESC $\alpha_G = 1$. 
\end{proposition}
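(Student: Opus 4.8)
The plan is to leverage the fact, established in Proposition~\ref{alpha-one}, that $\alpha_G\geq 1$ always holds. Consequently it suffices to produce, starting from \emph{some} optimal solution of system~(\ref{ip-messc}), an explicit completion to a solution of system~(\ref{ipmessc2}) in which the defining inequalities~(\ref{defalpha}) hold with $\alpha=1$, i.e.\ $\sum_{j=1}^{m} Z_{r}^{j}\leq \Delta_{r}^{GREEDY}$ for every $1\leq r\leq l$. Any such completion witnesses $\alpha_G\leq 1$, and together with Proposition~\ref{alpha-one} this forces $\alpha_G=1$.

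First I would pass to the concrete set-cover picture of Example~\ref{expl-sc}, where $f(S)=|X_{S}|$ and, by Observation~\ref{a-mesc}, $a_{r}^{j}=|(X_{W_{r}}\setminus X_{W_{r-1}})\cap P_{j}|$. Fix an optimal cover $g:X\rightarrow[m]$ (so $x\in P_{g(x)}$ for all $x$), whose associated solution of~(\ref{ip-messc}) is $X_{j}=|g^{-1}(j)|$; by the correspondence between covers and IP solutions proved in Section~4 this is indeed an optimal solution of~(\ref{ip-messc}). Since GREEDY terminates with a feasible cover, every element $x\in X$ is newly covered at exactly one greedy stage, namely the unique index $r(x)$ with $x\in X_{W_{r(x)}}\setminus X_{W_{r(x)-1}}$. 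I would then set $Z_{r}^{j}=|\{x\in X: g(x)=j,\ r(x)=r\}|$; that is, I route each element to the pair consisting of the greedy stage that first covers it and the optimal set to which it is assigned.

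It remains to check the constraints of system~(\ref{ipmessc2}) together with~(\ref{defalpha}) at $\alpha=1$. Summing over $r$ counts all elements $g$ assigns to $P_{j}$, giving $\sum_{r}Z_{r}^{j}=X_{j}$; summing over $j$ counts all elements first covered at stage $r$, giving $\sum_{j}Z_{r}^{j}=|X_{W_{r}}\setminus X_{W_{r-1}}|=\Delta_{r}^{GREEDY}$, so~(\ref{defalpha}) holds with \emph{equality} for $\alpha=1$. Finally, the elements counted by $Z_{r}^{j}$ all satisfy $g(x)=j$, hence lie in $P_{j}$, and also lie in $X_{W_{r}}\setminus X_{W_{r-1}}$; they therefore form a subset of $(X_{W_{r}}\setminus X_{W_{r-1}})\cap P_{j}$, which by Observation~\ref{a-mesc} yields $0\leq Z_{r}^{j}\leq a_{r}^{j}$.

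The only genuinely load-bearing step is this last inequality $Z_{r}^{j}\leq a_{r}^{j}$, and it is precisely where the intersection interpretation of Observation~\ref{a-mesc} is needed: reading $a_{r}^{j}$ as the overlap of $P_{j}$ with the elements freshly covered at stage $r$ is what guarantees there is always enough room to route the optimal assignment through the greedy stages without exceeding the budgets $a_{r}^{j}$. Everything else is bookkeeping over the two-way partition of $X$ induced by the map $x\mapsto(r(x),g(x))$, which is what makes both marginal sums collapse to $X_{j}$ and to $\Delta_{r}^{GREEDY}$ respectively.
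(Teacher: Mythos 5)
Your proof is correct and follows essentially the same route as the paper: you define $Z_{r}^{j}$ as the number of elements assigned to $P_{j}$ by the optimal cover that are first covered at greedy stage $r$, which is exactly the paper's $Z_{r}^{j}=|U_{j}\cap (X_{W_{r}}\setminus X_{W_{r-1}})|$, and you verify the same marginal sums and the same bound $Z_{r}^{j}\leq a_{r}^{j}$ via Observation~\ref{a-mesc}. The only cosmetic difference is that you explicitly invoke Proposition~\ref{alpha-one} for the lower bound $\alpha_G\geq 1$, which the paper leaves implicit.
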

{\bf Proof. }
First remember that, as noted in Observation~\ref{a-mesc}, for MESC 
\begin{equation} 
a_{r}^{j}=|(X_{W_{r}}\setminus X_{W_{r-1}})\cap P_{j}|
\end{equation} 

 Let $u:[N]\rightarrow [m]$ be an optimal solution to MESC, i.e. for any $1\leq i\leq N$, $i\in P_{u(i)}$ and the cover specified by $u$ has minimum entropy. 

Denote, for $j=1,\ldots, m$, $U_{j}=\{x\in [N]:u(x)=j\}$. $U_{j}\subseteq P_{j}$ is the set of elements assigned by cover $u$ to set $P_{j}$. 

Define, for $1\leq r\leq l$
\begin{equation} 
Z_{r}^{j}=|U_{j}\cap (X_{W_{r}}\setminus X_{W_{r-1}})|.
\end{equation} 

Then $0\leq Z_{r}^{j}\leq a_{r}^{j}$. Moreover
\[
\sum_{r=1}^{l} Z_{r}^{j}=\sum_{r=1}^{l} |U_{j}\cap (X_{W_{r}}\setminus X_{W_{r-1}})|=|U_{j}|
\]
\[
\sum_{j=1}^{m} Z_{r}^{j}=\sum_{j=1}^{m} |U_{j}\cap (X_{W_{r}}\setminus X_{W_{r-1}})|=|X_{W_{r}}\setminus X_{W_{r-1}}|,
\]
(as each of the two state systems $(U_{j})_{j\in [m]}$ and $((X_{W_{r}}\setminus X_{W_{r-1}}))_{r=1}^{l}$ consists of disjoint sets) hence $X_{j}=|U_{j}|$ and $Z_{r}^{j}$ satisfy system in Figure~(\ref{redundant-ip}) and equation~(\ref{defalpha}) with $\alpha=1$. 
\qed

\section{Network flow interpretation and extension of the main result.} 

Sometimes the application of our main result to specific problems is not quite as straightforward as above. An example is the Minimum Entropy Spanning Tree problem from Definition~\ref{def-mest}. Intuitively, in this case we would also like to apply Theorem~\ref{thm-main} by proving that $\alpha_G=1$. However, this second result is not easy to obtain. To understand why, we will first reinterpret constant $\alpha_G$ using network flows. This will allow us to eventually define a related  
constant $\beta_G$. The difference between $\alpha$ and $\beta$ is that roughly $\alpha$ is defined in terms of one-stage network flows, whereas in $\beta$ we will allow multistage constructions. 

We will prove a variant of Theorem~\ref{thm-main} for constant $\beta$, then we will give a multistage flow construction witnessing that for any instance $(G,w)$ of MEST $\beta=1$. This will prove the desired result. 

\begin{expl}
\mbox{ }For the MEST problem we have 
\[
f(W_{r})-f(W_{r-1})=|\{e\in E(G): e=(i_{r},k), k\sim i_{r}, k\not\sim W_{r-1}\}|
\]
Similarly 
\[
f(W_{r}\cup \{j\})-f(W_{r-1}\cup \{j\})=|\{e\in E(G): e=(i_{r},k), k\sim i_{r},k\not\sim W_{r-1}\cup \{j\} \}|
\]
Therefore 
\begin{equation} 
 a_{r}^{j}=\left\{\begin{array}{l}
1,\mbox{ if } i_{r}\neq j, i_{r}\sim j, j\not\in W_{r} \\
|\{k:k\sim i_{r},k\sim j, k\not\sim W_{r-1}\}|,\mbox{ if } i_{r}\neq j, i_{r}\not \sim j, j\not\in W_{r} \\
\Delta_{r}^{GREEDY},\mbox{ if }i_{r}=j\\
0,\mbox{ otherwise.}\\
\end{array}
\right.
\end{equation}
\end{expl} 

\begin{figure} 
\begin{center} 
 \begin{tikzpicture}
[place/.style={circle,draw=blue!50,fill=blue!20,thick,inner sep=1pt,minimum size=5mm},
transition/.style={rectangle,draw=black!50,fill=black!20,thick},
pre/.style={<-,shorten <=2pt,semithick},
post/.style={->,shorten >=2pt,semithick}]
\node[transition] (start) at 	(0,0) {$s$};
\node[place] (y1) at 	(2,-1.5) {$x_m$};
\node[] (y2) at 	(2,-0.5) {$\vdots$};
\node[place] (yp) at 	(2, 0.5) {$x_2$};
\node[place] (yl) at 	(2, 1.5) {$x_1$};
\foreach \x in {y1, y2, yp, yl}
	\draw[post] (start) to (\x);
\node[place] (x1) at 	(6,-1.5) {$y_{l}$};
\node[] (x2) at 	(6,-0.5) {$\vdots$};
\node[place] (xp) at 	(6, 0.5) {$y_2$};
\node[place] (xm) at 	(6, 1.5) {$y_1$};

	\draw[post] (y1) to (xm);
	\draw[post] (y2) to (x1);
	\draw[post] (yl) to (xp);

\node[transition] (end) at (8,0) {$t$};

\foreach \x in {x1, x2, xp, xm}
	\draw[post] (\x) to (end);
\end{tikzpicture}
\end{center}
\caption{Network flow interpretation of constant $\alpha$ in Definition~\ref{Z_alpha_delta}.}
\label{first-flow} 
\end{figure}
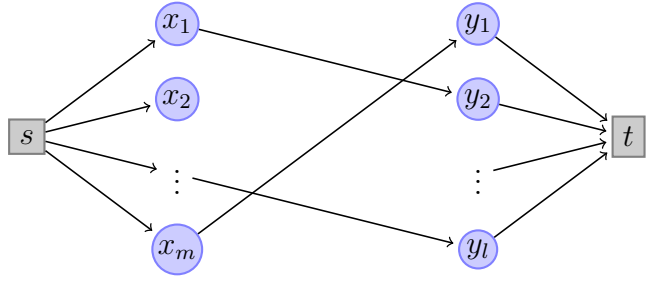

\begin{figure} 
\begin{center} 
\begin{tikzpicture}
[place/.style={circle,draw=blue!50,fill=blue!20,thick,inner sep=1pt,minimum size=5mm},
transition/.style={rectangle,draw=black!50,fill=black!20,thick},
pre/.style={<-,shorten <=2pt,semithick},
post/.style={->,shorten >=2pt,semithick}]

\node[transition] (start) at 	(0,0) {$s$};
\node[place] (y1) at 	(2,-1.5) {$x_m$};
\node[] (y2) at 	(2,-0.5) {$\vdots$};
\node[place] (yp) at 	(2, 0.5) {$x_2$};
\node[place] (yl) at 	(2, 1.5) {$x_1$};

\node[place] (z1) at 	(4,-1.5) {$z_s$};
\node[] (z2) at 	(4,-0.5) {$\vdots$};
\node[place] (zp) at 	(4, 0.5) {$z_2$};
\node[place] (zm) at 	(4, 1.5) {$z_1$};

\foreach \x in {y1, y2, yp, yl}{
	\draw[post] (start) to (\x);
}

\draw[post] (y1) to (zm);
	\draw[post] (y2) to (z1);
	\draw[post] (yl) to (zp);


\node[place] (x1) at 	(6,-1.5) {$y_l$};
\node[] (x2) at 	(6,-0.5) {$\vdots$};
\node[place] (xp) at 	(6, 0.5) {$y_2$};
\node[place] (xm) at 	(6, 1.5) {$y_1$};

\node[transition] (end) at (8,0) {$t$};

\draw[post] (z1) to (x1);
	\draw[post] (z2) to (xm);
	\draw[post] (zp) to (xm);

\foreach \x in {x1, x2, xp, xm}{
	\draw[post] (\x) to (end);
}
\end{tikzpicture}
\end{center}
\caption{Multistage flow network between solutions}  
\label{second-flow}
\end{figure}
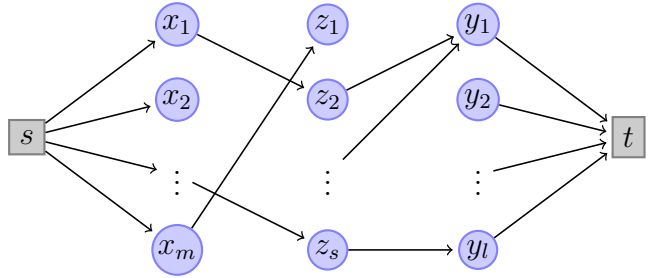

Consider, though, the flow network in Figure~\ref{first-flow}. In addition to source/sink nodes $s,t$, 
$F$ has two layers of nodes;  the first layer 
of nodes corresponding to the optimal solution, the second layer of nodes corresponding to the greedy one. In 
each layer we have a node for every player in the game. Edges appear between nodes of type $k$ and $i_{r}$, with 
capacity equal to $a_{k}^{r}$. The fact that the first layer of nodes corresponds to the optimal solution is reflected by setting capacity $X_{j}^{OPT}$ on the edge between node $s$ and node $j$. 
Similarly, capacities between node $i_{r}$ of the second layer and node $t$ are set to value $\Delta_{r}^{GREEDY}$. These values are seen as {\em requests} of node $Y_{r}$ that may be satisfied by the flow (which in general might send an amount larger than $\Delta_{r}^{GREEDY}$ to this node)

It follows that $\alpha_G=1$ amounts to the existence of a flow $f$ of value $n$ in the flow network of Figure~\ref{first-flow} (that is, $f$ satisfies the request of each node $Y_{r}$ exactly). More generally, $\alpha$ {\em is the minimum amount needed to multiply the capacities on the edges entering the sink $t$ in order to accommodate a flow with value $n$ from $s$ to $t$.}

Our solution to problem MEO could be easily recast in terms of flows: we construct the flow iteratively, by considering the paths between a node in the first layer and a node in the second layer inductively, in an order determined first by the order of second-layer nodes corresponding to the GREEDY algorithm and then going on nodes in the first layer according to a fixed ordering. 

There are lessons to be learned from the construction this flow and our proof of 
Theorem~\ref{thm-main}: The key point was that when we had to reorient an edge towards a node in the greedy solution, {\em we could do so without overflowing this node}. 
Similarly, the general proof depended on the following the invariant we maintained 
\begin{equation} \label{eq}
y_{r} \geq \sum_{k=r}^{l} Z_{k}^j
\end{equation} 
Condition~(\ref{eq}) does not have a direct flow interpretation, since $y_{r}$ is the request, rather than the actual flow value at the given node. However, its relaxation involving $\alpha$ does: the actual flow into node $y_{r}$ is at most $\alpha y_{r}$, so 
requiring that the total flow into node $y_{r}$ is at least $\sum_{k=r}^{l} Z_{k}^j$ guarantees the following relaxed version of equation~(\ref{eq}): 
$\alpha \cdot y_{r} \geq \sum_{k=r}^{l} Z_{k}^j$. We will see (in Proposition~\ref{thm-main-two} below) that the relaxed condition can be applied as well. 

We will generalize the setting of Theorem~\ref{thm-main} by considering flow networks with $q\geq 1$ levels (see 
Figure~\ref{second-flow} for $q=2$). The nodes in each level are ordered according to a fixed ordering, the same for all levels, say the ordering induced by the GREEDY algorithm, with nodes not chosen by this algorithm coming after all chosen nodes in a fixed, arbitrary sequence. Capacities in this network 
correspond either to values $a_{j}^{r}$ (if the chosen indices are $j$ and $i_{r}$, respectively) or $\infty$, for edges between nodes with the same index $j$ but on different levels. Note that each path ending in a greedy node with index $i_{r}$ has finite capacity, at most the capacity of its last edge. We will use notation $P:[j\ldots k]$ to indicate the fact that path $P$ starts at node $j$ on the first level and ends at node $k$ on the last level. We will also use notation $P\sim v$ to indicate the fact that path $P$ is adjacent to node $v$.

We will consider a total ordering $<$ on paths which will be explicitly constructed later. 

\begin{definition}
A flow $f$ is {\em admissible with respect to total path ordering $<$} if for any path $P$ between, say, node $X_{j}$ and $Y_{r}$, the remaining flow into node $X_{j}$ just before path $P$ is considered is at most the final value of the final total flow into node $Y_{r}$. 
Formally
\begin{equation}\label{admiss}
\sum_{Q\sim X_{j},P\leq Q} f(Q)\leq \sum_{W\sim Y_{r}} f(W).
\end{equation} 
\end{definition} 

Consider now a multiple-layer flow network corresponding to the optimal and greedy solutions, respectively (that is, the capacities of edges from s/into t are determined by the values of these solutions). Even with multiple layers it might not be possible to obtain an admissible  flow of value $n$. As before, the solution is to multiply the capacities of edges leading into node $t$ by some fixed amount $\beta$. 

\begin{definition}
Define $\beta_G$ as the infimum (over all multi-level flow networks corresponding to the optimal and greedy solution) of 
all values $\beta >0$ for which there exists a path ordering $<$ and a flow $f$ admissible w.r.t. $<$ such that for every pair of nodes $j$ and $r$, 
\begin{equation}\label{bflow} 
\sum_{j} \left(\sum_{P:[j\ldots i_{r}]} f_{P}\right)\leq \beta \cdot \Delta_{r}^{GREEDY}.
\end{equation} 
\label{beta-flow} 
The reader is requested to compare Definitions~\ref{Z_alpha_delta} and this definition. 
\end{definition}

Similarly to the proof of Proposition~\ref{alpha-one}, we obtain $\beta=\beta_G\geq 1$ always. On the other hand, admissibility will guarantee in general 
only a weaker version of inequality~(\ref{eq}): an extra $\beta$ factor is needed on the 
left-hand side (though this is not going to be weaker for the main setting we have in mind, $\beta=1$).

With this discussion we generalize Theorem~\ref{thm-main} as follows: 

\begin{proposition}\label{thm-main-two} 
 Given an instance $G=(N,f)$, of MESSC the greedy algorithm produces a solution $f_{GREEDY}$ satisfying 

\begin{equation} 
 \beta \cdot Ent[f_{GREEDY}]\leq Ent[f_{OPT}]+\log_{2}(e)+\beta\log_{2}(\beta)+(\beta-1)\log_{2}(n).
\end{equation} 
\end{proposition}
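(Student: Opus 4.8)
The plan is to follow the architecture of the proof of Theorem~\ref{thm-main} verbatim, replacing the integer assignment $(Z_r^j)$ used there with a path decomposition of an admissible flow, and carefully tracking the extra factors of $\beta$ introduced by the relaxed (flow) form of the invariant~(\ref{eq}). First I would fix, as in Definition~\ref{beta-flow}, a value $\beta$, a path ordering $<$, and a flow $f$ of value $n$ that is admissible w.r.t.\ $<$ and satisfies~(\ref{bflow}); I then decompose $f$ into path flows $f_P$ along paths $P:[j\ldots i_r]$ running from an optimal node $X_j$ on the first level to a greedy node $Y_r$ on the last level. Since $\sum_j X_j = N = n$, the flow saturates every source edge, so the path flows incident to a fixed $X_j$ sum to $X_j$, while the total flow $D_r$ entering $Y_r$ equals $\sum_j\sum_{P:[j\ldots i_r]} f_P$. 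I would assume $f$ is integral: because all interior capacities $a_r^j$ and all source capacities $X_j$ are integers this is legitimate for rational $\beta$, and the bound for $\beta_G$ then follows because the right-hand side of the claimed inequality is continuous in $\beta$ and $\beta_G$ is defined as an infimum.

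The heart of the argument is the flow analogue of Claim~\ref{ProdFact}, namely that $\prod_{r=1}^{l}(\beta y_r)^{\beta y_r}\ge \prod_{j} X_j!$, where $y_r=\Delta_r^{GREEDY}$. For a fixed node $X_j$ I would list its incident paths in increasing $<$-order and set $m_P=\sum_{Q\sim X_j,\,P\le Q} f(Q)$, the flow still waiting at $X_j$ just before $P$ is processed; these quantities decrease from $X_j$ down to the flow on the last incident path. A telescoping estimate using integrality, $m_P^{\,f_P}\ge m_P(m_P-1)\cdots(m_P-f_P+1)=m_P!/(m_P-f_P)!$, gives $\prod_{P\sim X_j} m_P^{\,f_P}\ge X_j!$. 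The admissibility inequality~(\ref{admiss}) bounds each $m_P$ by the final total flow into the endpoint $Y_r$ of $P$, that is by $D_r$, and constraint~(\ref{bflow}) gives $D_r\le \beta y_r$; hence $m_P\le \beta y_{r(P)}$ and $X_j!\le \prod_{P\sim X_j}(\beta y_{r(P)})^{f_P}$. Taking the product over all $j$ and regrouping the paths by their greedy endpoint yields $\prod_j X_j!\le \prod_{r}(\beta y_r)^{D_r}$, and because $\beta y_r\ge 1$ (as $\beta\ge 1$ and each greedy increment satisfies $y_r\ge 1$) the exponent may be raised from $D_r$ to $\beta y_r$, establishing the claim.

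With this inequality in hand the remaining computation mirrors the end of the proof of Theorem~\ref{thm-main}: I would take base-$2$ logarithms, apply the Stirling bound $X_j!\ge (X_j/e)^{X_j}$, and substitute the identities $\sum_r y_r=\sum_j X_j=n$, $\sum_r y_r\log_2 y_r = n\log_2 n - n\,Ent[f_{GREEDY}]$, and $\sum_j X_j\log_2 X_j = n\log_2 n - n\,Ent[f_{OPT}]$. Expanding $\log_2(\beta y_r)=\log_2\beta+\log_2 y_r$ produces the new additive term $\beta\log_2\beta$, and after dividing through by $n$ and collecting the $\log_2 n$ contributions their coefficient reduces to $(\beta-1)$, giving exactly the inequality $\beta\cdot Ent[f_{GREEDY}]\le Ent[f_{OPT}]+\log_2(e)+\beta\log_2(\beta)+(\beta-1)\log_2(n)$.

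I expect the main obstacle to be the bookkeeping in the second paragraph: verifying that $m_P$ as defined by admissibility is precisely the ``remaining flow at $X_j$'' needed for the telescoping, and that the telescoped lower bound $\prod_P m_P^{\,f_P}\ge X_j!$ survives when each factor $m_P$ is simultaneously bounded above by $\beta y_{r(P)}$ rather than by its exact value. Coupled with the integrality of the multi-level flow, this is the step where the single-stage reasoning of Theorem~\ref{thm-main} genuinely has to be re-engineered for the multistage network of Figure~\ref{second-flow}.
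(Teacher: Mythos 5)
Your proposal follows essentially the same route as the paper: you establish the same key inequality $\prod_{r}(\beta y_r)^{\beta y_r}\ge\prod_{j}X_j!$ by using admissibility and constraint~(\ref{bflow}) to bound the remaining flow at each optimal node by $\beta y_r$, telescope to a factorial, and then finish with the identical Stirling/entropy computation; the paper merely aggregates the path flows into $Z_r^j=\sum_{P:[j\ldots i_r]}f_P$ and telescopes at the level of greedy targets rather than individual paths, and handles the infimum by setting $\beta=\beta_G+\epsilon$ and letting $\epsilon\to 0$ instead of invoking continuity. Your explicit attention to integrality of the flow (needed for the telescoped factorial bound) and to the condition $\beta y_r\ge 1$ when raising the exponent from $D_r$ to $\beta y_r$ addresses points the paper leaves implicit.
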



The proof is almost identical to that of Theorem~\ref{thm-main}: Let $(X_i)_{i\in [m]}$ an optimal solution of the system from Figure (\ref{redundant-ip}) and $(y_i)_{i \in [m]}$ the solution generated by the greedy algorithm.

Consider a multi-layer flow network such that equation~(\ref{bflow}) is satisfied with $\beta=\beta_G+\epsilon$, for some 
$\epsilon >0$. Let $f$ be the corresponding admissible flow and define $Z_{r}^{j}=\sum_{P:[j\ldots i_{r}]} f_{P}$.

\begin{claim}\label{ProdFact2}
We have
\[
\prod_{r=1}^{l} [(\beta_G+\epsilon)\cdot y_{r}]^{(\beta_G+\epsilon) y_{r}} \geq \prod_{j \in OPT} X_j!.
\]
\end{claim}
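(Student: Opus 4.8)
The plan is to mirror the proof of Claim~\ref{ProdFact}, replacing the $r$-indexed invariant~(\ref{eq}) by the path-level admissibility condition~(\ref{admiss}). First I would record what the flow supplies: since the value-$n$ flow saturates every source edge, the total out-flow at each first-level node $X_{j}$ equals $X_{j}$ (consistently with $X_{j}=\sum_{r} Z_{r}^{j}$), while the total in-flow at each greedy node $Y_{r}$ is $\sum_{j} Z_{r}^{j}$, which by~(\ref{bflow}) is at most $\beta y_{r}$ with $\beta=\beta_{G}+\epsilon$. I would assume the admissible flow $f$ is integral, so that every $f_{P}$ and every $Z_{r}^{j}=\sum_{P:[j\ldots i_{r}]} f_{P}$ is a nonnegative integer; this matches the constraint $Z_{r}^{j}\in{\bf Z}$ of Figure~\ref{redundant-ip} and is exactly what makes the falling-factorial estimate below legitimate.

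The heart of the argument is a per-source telescoping. Fix $j\in OPT$ and list the positive-flow paths adjacent to $X_{j}$ in increasing $<$-order as $P_{1}<P_{2}<\cdots<P_{t}$, with $P_{a}$ ending at node $Y_{r(a)}$. Writing $s_{a}=\sum_{b\ge a} f_{P_{b}}$ for the suffix sums, so that $s_{1}=X_{j}$ and $s_{t+1}=0$, admissibility~(\ref{admiss}) applied to $P_{a}$ says precisely that the remaining out-flow $s_{a}$ at $X_{j}$ is bounded by the total in-flow at $Y_{r(a)}$, hence $s_{a}\le \beta y_{r(a)}$. Since $s_{a}-f_{P_{a}}=s_{a+1}$, this gives $(\beta y_{r(a)})^{f_{P_{a}}}\ge s_{a}^{\,f_{P_{a}}}\ge s_{a}(s_{a}-1)\cdots(s_{a+1}+1)=s_{a}!/s_{a+1}!$, and telescoping over $a$ yields $\prod_{P\sim X_{j}} (\beta y_{r(P)})^{f_{P}}\ge s_{1}!/s_{t+1}!=X_{j}!$.

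Finally I would multiply these bounds over all $j\in OPT$ and regroup by greedy endpoint. Each positive-flow path $P$ is adjacent to a unique source $X_{j}$ and a unique greedy node $Y_{r}$, so $\prod_{j}\prod_{P\sim X_{j}} (\beta y_{r(P)})^{f_{P}}=\prod_{r} (\beta y_{r})^{\sum_{j} Z_{r}^{j}}$. Using $\beta y_{r}\ge 1$ (as $\beta\ge\beta_{G}\ge 1$ and each greedy marginal $y_{r}\ge 1$) together with $\sum_{j} Z_{r}^{j}\le \beta y_{r}$ from~(\ref{bflow}), raising the base $\beta y_{r}\ge 1$ to the larger exponent only increases the product, so $\prod_{r}(\beta y_{r})^{\beta y_{r}}\ge \prod_{r}(\beta y_{r})^{\sum_{j} Z_{r}^{j}}\ge \prod_{j\in OPT} X_{j}!$, which is Claim~\ref{ProdFact2} with $\beta=\beta_{G}+\epsilon$.

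I expect the only genuinely delicate point to be the reinterpretation of admissibility~(\ref{admiss}) as the suffix-sum inequality $s_{a}\le \beta y_{r(a)}$ — that is, matching the informal ``remaining flow into $X_{j}$ just before $P$ is considered'' with the concrete suffix sum under the processing order induced by $<$ — together with the integrality assumption needed to run the falling-factorial telescoping. Everything else is the same manipulation as in Claim~\ref{ProdFact}, now carried out along flow paths rather than along the greedy order.
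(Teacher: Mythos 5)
Your proof is correct and follows essentially the same route as the paper: admissibility bounds the remaining (suffix) flow at a source node by the in-flow at the target greedy node, which by the definition of $\beta$ is at most $(\beta_G+\epsilon)y_r$; a falling-factorial telescoping then yields $\prod X_j!$, and regrouping by greedy endpoint finishes the bound. The only difference is that you run the telescoping path-by-path rather than grouping paths into the aggregates $Z_r^j$ as the paper does, and you make explicit the integrality and $\beta y_r \geq 1$ points that the paper leaves implicit.
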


As flow $f$ is admissible, 

\[
 (\beta_G+\epsilon)\cdot y_{r} \geq \sum_{k=r}^{l} Z_{k}^j = X_j - \sum_{k=1}^{r-1} Z_{k}^j.
\]

This follows from considering the situation just before setting the flow on the lexicographically smallest path between node $j$ and $i_{r}$: $X_j - \sum_{k=1}^{r-1} Z_{k}^j$ is the amount of unsent flow at node $j$, to be sent on a path to one of nodes $i_{r},\ldots, i_{l}$. No such path has been considered yet, as they are lexicographically larger. 

Therefore, 
\[
 \prod_{j \in OPT} \prod_{r=1}^{l} \left((\beta_G+\epsilon)y_{r}\right)^{Z_{k}^j} \geq \prod_{j \in OPT} 
\left( \prod_{r=1}^{l} \left(X_j - \sum_{k=1}^{r-1} Z_{k}^j \right)^{Z_{k}^j} \right) \ \geq \prod_{j \in OPT}\left(X_j\right)!
\]

On the other hand, by Definition (\ref{beta-flow}) 
\[
\prod_{j \in OPT} \prod_{r=1}^{l} \left((\beta_G+\epsilon)y_{r}\right)^{Z_{k}^j} = \prod_{r=1}^{l} {(\beta_G+\epsilon)y_{r}}^{\sum_{j} Z_{r}^j} 
\leq \prod_{r=1}^{l} \left((\beta_G+\epsilon)y_{r}\right)^{(\beta_G+\epsilon)y_{r}}.
\]

\qed
\newline
The rest of the computation is similar, except that we also have to take the limit $\epsilon \rightarrow 0$ in the end, to obtain 
the desired result. 
\qed

\section{Application to the minimum entropy spanning tree problem.} 

Finally we are in position to  apply the result in Proposition~\ref{thm-main-two} by proving 
the following: 

\begin{theorem}\label{mest}
 For any instance $G$ of MEST, $\beta_G = 1$. That is, if $f_{GREEDY}$ is the cover provided by the GREEDY algorithm and $f_{OPT}$ is the optimal cover
\[
Ent[f_{GREEDY}]\leq Ent[f_{OPT}]+\log_{2}(e).
\]
\end{theorem}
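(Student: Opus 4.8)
The plan is to establish $\beta_G=1$ and then simply invoke Proposition~\ref{thm-main-two}: substituting $\beta=1$ there makes the terms $\beta\log_2(\beta)$ and $(\beta-1)\log_2(n)$ vanish, leaving exactly $Ent[f_{GREEDY}]\le Ent[f_{OPT}]+\log_2(e)$. Since the remark following Definition~\ref{beta-flow} already gives $\beta_G\ge 1$, the whole task reduces to exhibiting one multi-level flow network (corresponding to the optimal and greedy solutions), one total path ordering $<$, and one flow $f$ admissible w.r.t.\ $<$, of value $n=f(U)=|V|-1$, meeting every sink demand exactly, i.e.\ $\sum_j Z_r^j=\Delta_r^{GREEDY}$ for all $r$ (this is~(\ref{bflow}) with $\beta=1$). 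Note that since $\sum_r\sum_j Z_r^j=\sum_j X_j=n=\sum_r\Delta_r^{GREEDY}$, meeting each demand with $\le$ forces equality, so it suffices to route all of the optimal supply without ever overflowing a greedy node.

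First I would fix a common root and orient both spanning trees: let $T_G$ be the greedy tree built in Example~\ref{greedy-memc} and $T_O$ an optimal tree realizing the optimal cover $(X_j)_j$ (such a tree exists because the cover constraints $\sum_{j\in S}X_j\le f(S)$ are exactly the condition for $(X_j)_j$ to be an orientation of some spanning tree). Rooting both trees at the same vertex, every non-root vertex $v$ acquires a parent edge $p_G(v)$ in $T_G$ and $p_O(v)$ in $T_O$, owned respectively by the greedy owner $u_G(p_G(v))=i_{r(v)}$ and the optimal owner $u_O(p_O(v))=j(v)$; crucially, both parent edges are incident to $v$. The target flow sends, for each non-root $v$, one unit from $j(v)$ to $i_{r(v)}$. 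By construction each greedy node $i_r$ then receives exactly $\Delta_r^{GREEDY}$ units and each optimal node $j$ emits exactly $X_j$ units, so the demand/supply marginals are met with $\beta=1$.

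The role of the extra levels is to realize these units as genuine paths. When $j(v)$ and $i_{r(v)}$ coincide or are adjacent, a single edge of capacity $a_{r(v)}^{j(v)}\ge 1$ suffices, exactly as in the MEO argument. When they are non-adjacent, their two parent edges still share the endpoint $v$, so $v$ is a common neighbour; moreover $v$ is freshly connected at its greedy stage, hence $v\not\sim W_{r(v)-1}$, so $v$ is one of the common neighbours counted by the middle case of the formula for $a_{r}^{j}$. I would route such a unit through the intermediate node indexed by $v$, pivoting $j(v)\to v\to i_{r(v)}$; it is precisely this pivot that a single-level network (the constant $\alpha$) cannot perform in general, and that motivates the passage to $\beta$.

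It remains to pick the ordering and check admissibility. I would order paths lexicographically by the greedy stage $r$ of their target (breaking ties by a fixed order on sources), so that for the first path leaving $j$ toward $i_r$ the unsent supply at $j$ equals $\sum_{k\ge r}Z_k^j$ while the total flow into $i_r$ equals $\Delta_r^{GREEDY}$; admissibility~(\ref{admiss}) thus amounts to $\sum_{k\ge r}Z_k^j\le\Delta_r^{GREEDY}$ for every $j,r$. Here greedy maximality enters: by Claim~\ref{sumA_r_j} one has $\sum_{k\ge r}a_k^j=f(W_{r-1}\cup\{j\})-f(W_{r-1})\le\Delta_r^{GREEDY}$, so admissibility would follow once the construction is shown to respect $\sum_{k\ge r}Z_k^j\le\sum_{k\ge r}a_k^j$. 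I expect this last point to be the main obstacle: unlike the set-cover case, the optimal and greedy edges meeting at a common vertex are distinct, so the pivoted units passing through a given $j$ must be charged against $j$'s aggregate fresh marginal gain rather than against the individual capacities $a_r^j$, and verifying that the collection of child-based pivots assembles into one consistent capacity-feasible flow is the delicate, genuinely tree-theoretic step. Once it is in place, $\beta_G=1$ follows and the theorem is immediate.
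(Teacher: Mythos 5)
Your high-level framing is the same as the paper's: show $\beta_G=1$, note $\beta_G\geq 1$ is automatic, and plug $\beta=1$ into Proposition~\ref{thm-main-two}; your computation of the marginals (each greedy node demands exactly $\Delta_r^{GREEDY}$, each optimal node supplies exactly $X_j$, and the totals force equality) is also correct. But the proof has a genuine gap, and you name it yourself: ``verifying that the collection of child-based pivots assembles into one consistent capacity-feasible flow is the delicate, genuinely tree-theoretic step. Once it is in place\ldots''. That step \emph{is} the theorem; everything before it is bookkeeping. Worse, the specific routing you propose --- one unit per non-root vertex $v$ from the optimal owner $j(v)$ of $p_O(v)$ to the greedy owner $i_{r(v)}$ of $p_G(v)$, pivoting through the intermediate node indexed by $v$ --- does not survive a capacity check. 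Your justification that $v$ is counted in the middle case of the formula for $a_r^j$ rests on the claim ``$v$ is freshly connected at its greedy stage, hence $v\not\sim W_{r(v)-1}$,'' which is false in general: the greedy algorithm explicitly adds edges $(i_r,x)$ with $x\in\delta(W_{r-1})$ (this is exactly how it merges the components $C_1,\dots,C_p$), so $v$ may well be adjacent to $W_{r(v)-1}$ and contribute nothing to that count. In addition, the first leg of your pivot, from node $j(v)$ to the intermediate node indexed by $v=i_{rank(v)}$, has capacity $a_{rank(v)}^{j(v)}$, which vanishes whenever $rank(j(v))<rank(v)$; nothing in your setup rules this out, so a fixed two-level pivot through the common endpoint $v$ need not be feasible, and two levels are not enough in general.

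The paper closes exactly this gap by a different construction: it transforms $T_{OPT}$ into $T_{GREEDY}$ through a sequence of \emph{biased} edge reversals, rotations and edge slidings, each move preserving the tree property of the intermediate graph and each translated into one unit of flow across one or more levels of the network (so paths can be long, not just length two). The ``biased'' rank condition $rank(\text{source})\geq rank(\text{sink})$, which composes across levels, is what replaces your per-vertex pivot, and admissibility is then proved by a counting argument at each greedy stage $m$: every unit of flow leaving $j$ on a path not yet scheduled corresponds to an edge still available to $j$ at stage $m$, and greedy maximality bounds the number of such edges by $\Delta_m^{GREEDY}$. You correctly identified that admissibility reduces to $\sum_{k\geq r}Z_k^j\leq\Delta_r^{GREEDY}$ and that Claim~\ref{sumA_r_j} plus greedy maximality is the relevant inequality, but without a concrete flow that is simultaneously capacity-feasible and compatible with an ordering realizing that bound, the argument does not go through.
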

{\bf Proof. }
We will create a flow, admissible with respect to some total path ordering $<$, that will witness the fact that $\beta_G=1$. To do so we first revisit the GREEDY algorithm for 
MEST (Example~\ref{greedy-memc}). 

As discussed there, the GREEDY algorithm builds an ``independent set'' (forest, in the particular case of MEST) incrementally: edges are only added, but not removed. After some 
stage $k$, $1\leq k\leq l$ the edges added by the GREEDY algorithm connect nodes in $W_{k}$ to some other nodes. Denote by $\delta(W_{k})$ the set of nodes not in $W_{k}$ but adjacent to some node in $W_{k}$ (after stage $k$). 

Consider some stage $r$, $1\leq r\leq l$. Denote by $C_{1},C_{2},\ldots, C_{p}$ the connected components (trees) created by the GREEDY algorithm after stage $r-1$.
 Node $i_{r}$ chosen at stage $r$ will connect to some of its adjacent nodes 
(not already selected), so that the resulted induced graph contains no cycles.

We infer the following: 
\begin{itemize} 
\item Any edge $(i_{r},x)$, with $x$ not in $C_{1}\cup C_{2}\cup \ldots \cup C_{p}$ is added by the GREEDY algorithm (charged to $i_{r}$) at stage $r$. 
\item The GREEDY algorithm also adds some of the edges $(i_{r},x)$, where $x$ 
belongs to a connected component among $C_{1},C_{2},\ldots, C_{p}$. It can only add such
 an edge if $i_{r}$ is not already connected to some node in that component 
(necessarily a member of $W_{r-1}$), thus creating a cycle. More precisely, in this case 
it will add {\em exactly one edge for each such component it's adjacent to}, merging in 
effect these components.

Even in this case, note that $x$ cannot belong to $W_{r-1}$, but to $\delta(W_{r-1})$. 
Indeed, suppose $x$ were an element in $W_{r-1}$. Edge $(i_{r},x)$ was not added when 
 $x$ was considered because it was creating a cycle. But then adding it would 
create a cycle now as well. 
\end{itemize} 

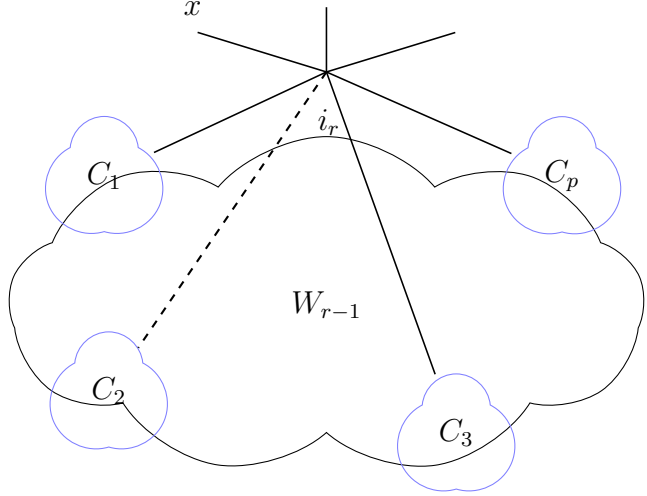
\begin{figure}[ht]
\begin{center} 
\begin{tikzpicture}
[comp/.style={cloud,draw=blue!50, cloud puffs=3, aspect=0.7, cloud puff arc=170,inner ysep=0.3cm},
place/.style={thick,inner sep=1pt,minimum size=5mm, right},
link/.style={-,shorten >=2pt,semithick, above=10}]

\node[name=G, shape=cloud, aspect=2, inner ysep=1cm] {};
\node[place] (ir) at (G.mid) [right=1, above=2] {$i_r$};
\node[name=w, shape=cloud, draw, cloud puffs=9, aspect=2.5, cloud puff arc=100, inner ysep=1cm] at (G.south) {$W_{r-1}$};
\node[comp] (c1) at (w.puff 2)[left=-15] {$C_1$};
\node[comp] (c2) at (w.puff 4)[right=0.02] {$C_2$};
\node[comp] (c3) at (w.puff 6)[left=-8, above=-9] {$C_3$};
\node[comp] (c4) at (w.puff 9)[right=-10] {$C_p$};

\foreach \k in {1,3,4}
  \draw[link] (G.mid) + (0,1) to (c\k);
\draw[dashed,thick] (G.mid) + (0,1) to (c2);

\foreach \k in {1,2,10}
  \draw[link] (G.mid) + (0,1) to (G.puff \k);

\node[place] (ir) at (G.puff 2) [above=1] {$x$};
\end{tikzpicture}
\caption{The GREEDY algorithm for MEST: at stage $r$, it adds edges from $i_{r}$ to nodes in components 
$C_{1},C_{3},C_{p}$ (merging them). It does {\em not} add an edge to component $C_{2}$, 
as there already existed a (shaded) edge from $i_{r}$ to a node in $C_{2}\cap W_{r-1}$. It 
also adds edges $(i_{r},x)$ to nodes $x$ outside of $C_{1},C_{2},\ldots, C_{p}$.}
\end{center} 
\end{figure}

As a consequence of the previous analysis the following holds: 
\begin{lemma} 
Suppose edge $(i_{r},x)$ is added by the GREEDY algorithm at stage $r$. Then 
\[
rank(x)>r, 
\]
where $rank(x)$ is the GREEDY rank of the element $x$, the stage when the element $x$ was chosen.
\end{lemma}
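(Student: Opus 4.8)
\section*{Proof proposal}

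The plan is to read the conclusion directly off the structural case analysis carried out immediately above the lemma, after first translating the inequality $rank(x)>r$ into the cleaner combinatorial condition $x\notin W_r$. Recall that $rank$ is defined on the chosen indices by $rank(i_k)=k$ and extended to the remaining vertices by placing them, in a fixed arbitrary order, after all of $i_1,\ldots,i_l$. Consequently $rank(x)>r$ holds precisely when $x$ is not among the first $r$ chosen indices, i.e.\ when $x\notin W_r=\{i_1,\ldots,i_r\}$. So it suffices to show that the far endpoint $x$ of any edge $(i_r,x)$ added by GREEDY at stage $r$ satisfies $x\notin W_r$.

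First I would dispose of the trivial part: since $(i_r,x)$ is an edge of the simple graph $G$, we have $x\neq i_r$, so it remains only to rule out $x\in W_{r-1}$. Here I would invoke the two-case description of the edges charged to $i_r$ that precedes the lemma. In the first case $x$ lies outside all of the components $C_1,\ldots,C_p$ that GREEDY has assembled after stage $r-1$; since these components contain, in particular, all of the previously chosen vertices, we have $W_{r-1}\subseteq C_1\cup\cdots\cup C_p$, whence $x\notin W_{r-1}$. In the second case $x$ belongs to one of the components, and the reductio already given shows $x\in\delta(W_{r-1})$ rather than $x\in W_{r-1}$: were $x$ a member of $W_{r-1}$, the edge $(i_r,x)$ would have closed a cycle at the moment $x$ was processed exactly as it would now, contradicting the fact that GREEDY actually adds it at stage $r$. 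Either way $x\notin W_{r-1}$.

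Combining $x\neq i_r$ with $x\notin W_{r-1}$ gives $x\notin W_{r-1}\cup\{i_r\}=W_r$, and by the reformulation in the first paragraph this is exactly $rank(x)>r$. The only point requiring care --- and the closest thing to an obstacle --- is the bookkeeping at the boundary of the rank extension: one must check that an endpoint $x$ that is never selected as a greedy index (or selected only after the final stage $l$) still satisfies $rank(x)>r$, which is immediate from the convention that all unchosen vertices are ranked after $i_1,\ldots,i_l$. Everything else is a direct transcription of the structural facts already established in the analysis preceding the lemma, so I expect the argument to be short.
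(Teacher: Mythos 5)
Your proposal is correct and follows essentially the same route as the paper: both read the conclusion off the two-case analysis preceding the lemma, using that in the first case $x$ lies outside the assembled components (hence outside $W_{r-1}$) and in the second case the reductio places $x$ in $\delta(W_{r-1})$. Your version merely spells out the bookkeeping (the reduction of $rank(x)>r$ to $x\notin W_r$ and the inclusion $W_{r-1}\subseteq C_1\cup\cdots\cup C_p$) that the paper leaves implicit.
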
 

Element $x$ clearly cannot belong to $W_{r-1}$, if $x$ falls in the first case of the 
previous discussion. As for the second case, by the argument there $x\in \delta(W_{r-1})$, which implies the fact $rank(x)>r$. 
\qed

We will also need a flow property  that ensures flow admissibility in the particular case
 of the MEST problem: 

\begin{definition} 
A flow is {\em biased (with respect to vertex ordering $r$)} if, for all nodes $j,l$
\begin{equation} 
\exists P:[j,l], f_{P}>0 \Rightarrow [rank(j)\geq rank(l)].
\label{biased-flow}
\end{equation} 
\end{definition} 

The importance of this notion lies in the fact that, while condition~(\ref{biased-flow}) is not necessarily satisfied ``between the endpoints'' of a flow, biased flows can intuitively be ``composed'', as rank inequality is transitive. 

Next we prove the following claim:
\begin{lemma}\label{first-lemma} 
Given an instance $X$ of the MEST problem let $X_{OPT}$ and $Y_{GREEDY}$ be the vectors corresponding to the optimal and greedy solution, respectively, with elements ordered according to the ordering induced by the greedy algorithm. 

Then there exists a biased flow $f$ with initial values $X_{OPT}$  and final values $Y_{GREEDY}$. 
\end{lemma}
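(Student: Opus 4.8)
The plan is to exhibit the required object as a feasible integral flow in a transportation network and to reduce its existence to a single family of cut inequalities supplied by the feasibility of the optimal cover. Order the vertices by their greedy rank and view each vertex $j$ as a source of supply $X_j$ and each vertex $v$ as a sink of demand $Y_v$, where $Y_r=f(W_r)-f(W_{r-1})$ is the greedy charge at $i_r$ and $X_j$ the optimal charge; note that $\sum_j X_j=\sum_r Y_r=n$. The bias condition~(\ref{biased-flow}) is exactly the requirement that an arc $j\to v$ may carry flow only when $rank(j)\geq rank(v)$. I would first invoke the max-flow/min-cut theorem (equivalently, the deficiency form of Hall's condition for transportation) to reduce the existence of a biased flow saturating all demands to the inequality $\sum_{v\in B} Y_v\leq \sum_{j\in N(B)} X_j$ for every set $B$ of sinks, where $N(B)=\{\,j:rank(j)\geq \min_{v\in B} rank(v)\,\}$ is the collection of sources allowed to feed $B$.

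Because $N(B)$ depends on $B$ only through $\min_{v\in B} rank(v)$, it is enough to verify the inequality on the rank up-sets $B_{r_0}=\{\,v:rank(v)\geq r_0\,\}$. On these the demand telescopes: since unchosen vertices carry zero greedy demand and $f(W_l)=n$, one gets $\sum_{v\in B_{r_0}} Y_v=\sum_{r=r_0}^{l}\big(f(W_r)-f(W_{r-1})\big)=n-f(W_{r_0-1})$. On the supply side $N(B_{r_0})=U\setminus W_{r_0-1}$, and the cover constraint of Figure~\ref{redundant-ip} applied with $S=W_{r_0-1}$ gives $\sum_{i\in U\setminus W_{r_0-1}} X_i\geq f(U)-f(W_{r_0-1})=n-f(W_{r_0-1})$. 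Hence the cut inequality holds for every $r_0$; as all supplies, demands and arc capacities are integral, max-flow/min-cut yields an integral flow, and the equality $\sum_j X_j=\sum_v Y_v$ forces it to realise the prescribed vectors $X_{OPT}$ and $Y_{GREEDY}$ exactly. Equivalently, one may build the flow directly by processing sinks in decreasing rank and feeding each $i_r$ from the as-yet-unused supply of rank at least $r$, the same inequality guaranteeing that one never runs short.

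The step I expect to require the most care is checking that this abstract rank-respecting transfer can genuinely be routed inside the multi-level network of Definition~\ref{beta-flow}. The structural lemma just established shows that every greedy edge charged to $i_r$ meets a vertex of rank strictly greater than $r$, so each elementary arc $(j,i_r)$ of positive capacity $a_r^j$ already satisfies $rank(j)\geq rank(i_r)$; thus every single-level step is biased, and since the rank comparison is transitive, any composition of such steps across levels stays biased. The obstacle is that the arcs available out of a fixed source $j$ are limited to the indices $r$ with $a_r^j>0$, so realising a prescribed transfer $j\to v$ with $rank(j)\geq rank(v)$ may force a detour through intermediate indices along the infinite-capacity inter-level edges; here I would use the connectivity of $G$ together with the explicit adjacency description of $a_r^j$ in the MEST specialisation to produce enough rank-monotone paths to carry the flow built above. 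Once Lemma~\ref{first-lemma} is in hand, the passage to an admissible flow and the conclusion $\beta_G=1$ follow from the apparatus already in place.
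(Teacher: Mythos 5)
Your cut-counting argument is correct as far as it goes, and it is a clean way to see that the two marginals are compatible with the rank constraint: the telescoping identity $\sum_{v\in B_{r_0}}Y_v=n-f(W_{r_0-1})$ together with the cover inequality $\sum_{i\in U\setminus W_{r_0-1}}X_i\geq f(U)-f(W_{r_0-1})$ does show, via max-flow/min-cut, that an integral rank-respecting transportation plan moving $X_{OPT}$ onto $Y_{GREedy}$ exists in the \emph{complete, uncapacitated} bipartite network whose arcs are the pairs $(j,v)$ with $rank(j)\geq rank(v)$. But that is not what Lemma~\ref{first-lemma} asserts. A biased flow in the sense of condition~(\ref{biased-flow}) must live in the multi-level network underlying Definition~\ref{beta-flow}, whose inter-level arcs carry the finite capacities $a^j_r$ (for MEST mostly $0$ or $1$), and the whole difficulty of the lemma is to route the transfer through those capacities. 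You acknowledge this in your closing paragraph and defer it to ``connectivity of $G$ together with the explicit adjacency description of $a^j_r$'', but that deferred step is not a technicality: it is the entire content of the paper's proof, which builds the flow explicitly by transforming $T_{OPT}$ into $T_{GREEDY}$ through biased edge reversals, rotations and slidings, each move contributing one unit of flow along an arc guaranteed to have positive capacity precisely because it corresponds to an actual edge of the graph being recharged between its endpoints, with the tree invariant and the cycle/cut argument ensuring such a replacement edge $e^{\prime}$ always exists. Nothing in the cut computation guarantees that the abstract plan it produces is routable; indeed, the reason the paper introduces the multi-level constant $\beta$ at all, rather than working with the single-level constant $\alpha$, is that one-stage routing through the $a^j_r$ is insufficient here, so the routing obstruction is genuine and not something connectivity of $G$ hands you for free.

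A second, related problem is that Lemma~\ref{first-lemma} is not used in isolation: Lemma~\ref{second-lemma} proves admissibility (inequality~(\ref{admiss})) for the \emph{specific} flow built from the tree moves, by pairing each unit of flow with an identifiable edge of $T_{OPT}$ and arguing about which edges could still be charged to $j$ at stage $m$. An existence proof by max-flow/min-cut gives no such edge-to-path correspondence, so even after patching the routing step you would have to redo the admissibility argument from scratch for whatever flow the cut theorem produces. In short, your computation verifies a necessary compatibility condition on the marginals, but the constructive heart of the lemma --- producing a capacity-feasible, biased, and ultimately admissible flow --- is missing.
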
 
Let $T_{OPT}$ be the spanning tree (with oriented edges) corresponding to $X_{OPT}$, and let $T_{GREEDY}$ be the spanning tree with oriented edges corresponding to $Y_{GREEDY}$. We will construct a multi-level flow network and a greedy flow in stages, corresponding to edge moves that transform $T_{OPT}$ into $T_{GREEDY}$. Flow values on some node $v$ on an intermediate level correspond to edges oriented towards $v$ at that stage. 

Allowed moves are of two basic types: 
\begin{enumerate}
\item{\bf ``Edge reversals''.} Consider an edge $e=(w_{1},w_{2})$ in the current tree, oriented towards $w_{2}$. We reorient edge $e$ towards $w_{1}$. Biased edge reversals are those with $rank(w_{1})< rank(w_{2})$. 


\begin{figure}[h!]
 \begin{center}
\begin{tikzpicture}
[pre/.style={<-,shorten <=2pt,semithick},
place/.style={thick,inner sep=1pt,minimum size=5mm, right},
post/.style={->,shorten >=2pt,semithick},
arrow/.style={->, aspect=4, double, shorten >=4pt, thick},
elips/.style={draw=blue!40, fill=gray!25, shape=ellipse, inner ysep=1.6cm, inner xsep=1cm},
link/.style={-,shorten >=0.1pt,semithick}
]

\node[name=R,shape=rectangle, inner xsep=3.5cm] {};

\node[name=T,elips] at (R.west)[left=15] {};
\node[place] (w1) at (T.mid) [left=10, below=30] {$w_1$};
\node[place] (w2) at (T.mid) [right=7, above=35] {$w_2$};
\draw[post] (w1) to node[auto]{e} (w2);

\node[name=T1,elips] at (R.west)[right=15] {};
\node[place] (w1) at (T1.mid) [left=10, below=30] {$w_1$};
\node[place] (w2) at (T1.mid) [right=7, above=35] {$w_2$};
\draw[pre] (w1) to node[auto]{e} (w2);

\draw[arrow] (T) + (1.6,0) to (T1);

\matrix[column sep=1cm] at (R.east)
{
\node {.}; & \node{.}; \\
\node {.}; & \node{.}; \\
\node {.}; & \node{.}; \\
\node {.}; & \node (fw1){}; \\
\node {.}; & \node{.}; \\
\node {.}; & \node{.}; \\
\node (fw2){}; & \node{.}; \\
\node {.}; & \node{.}; \\
\node {.}; & \node{.}; \\
\node {.}; & \node{.}; \\
};
\node (ww2) at (fw2)[left] {$w_2$};
\node (ww1) at (fw1)[right] {$w_1$};
\draw[link] (fw2)circle(0.7mm) to (fw1)circle(0.7mm);

\end{tikzpicture}

\caption{Edge reversals and associated flow.}
\label{reversals}
\end{center} 
\end{figure}
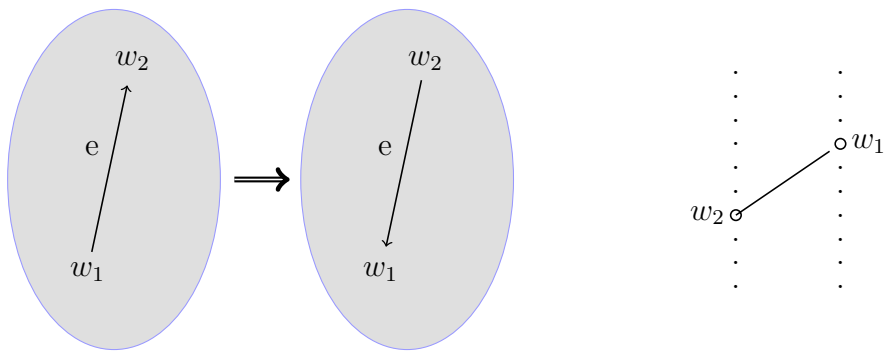


\item{\bf ``Rotations".} Consider an edge $e=(w_{1},w_{2})$ in the current tree, oriented towards $w_{1}$, and let $w_{3}$ be another vertex connected to $w_{1}$, such that edge $(w_{1},w_{3})$ is {\em not} in the tree. Replace $(w_{1},w_{2})$ by $(w_{1},w_{3})$. 


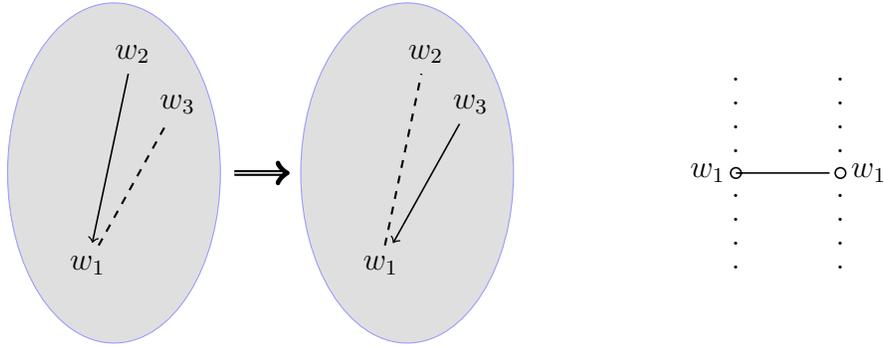
\begin{figure}[h!]
\begin{center} 
\begin{tikzpicture}
[pre/.style={<-,shorten <=1pt,semithick},
place/.style={thick,inner sep=1pt,minimum size=5mm, right},
post/.style={->,shorten >=1pt,semithick},
arrow/.style={->, aspect=4, double, shorten >=4pt, thick},
elips/.style={draw=blue!40, fill=gray!25, shape=ellipse, inner ysep=1.6cm, inner xsep=1cm},
link/.style={-,shorten >=1pt,semithick}
]
\node[name=R,shape=rectangle, inner xsep=3.5cm] {};
\node[name=T,elips] at (R.west)[left=15] {};
\node[place] (w1) at (T.mid) [left=10, below=30] {$w_1$};
\node[place] (w2) at (T.mid) [right=7, above=35] {$w_2$};
\node[place] (w3) at (T.mid) [right=24, above=16] {$w_3$};
\draw[post] (w2) to (w1);
\draw[dashed,thick] (w1) to (w3);

\node[name=T1,elips] at (R.west)[right=15] {};
\node[place] (w1) at (T1.mid) [left=10, below=30] {$w_1$};
\node[place] (w2) at (T1.mid) [right=7, above=35] {$w_2$};
\node[place] (w3) at (T1.mid) [right=24, above=16] {$w_3$};
\draw[post] (w3) to (w1);
\draw[dashed,thick] (w1) to (w2);

\draw[arrow] (T) + (1.6,0) to (T1)[left=3];

\matrix[column sep=1cm] at (R.east)
{
\node {.}; & \node{.}; \\
\node {.}; & \node{.}; \\
\node {.}; & \node{.}; \\
\node {.}; & \node{.}; \\
\node (fw2){}; & \node (fw1){}; \\
\node {.}; & \node{.}; \\r(
\node {.}; & \node{.}; \\
\node {.}; & \node{.}; \\
\node {.}; & \node{.}; \\
};
\node (ww2) at (fw2)[left] {$w_1$};
\node (ww1) at (fw1)[right] {$w_1$};
\draw[link] (fw2)circle(0.7mm) to (fw1)circle(0.7mm);
\end{tikzpicture}

\caption{Edge rotations and associated flow.}
\label{rotations}
\end{center} 
\end{figure} 


\hspace{-10mm} We will use, in fact, a third type, specified as follows, composed of the previous two moves. 

\item{\bf ``Edge slidings".} Let $a,b,c$ be three nodes 
Assume that edge $(b,c)$ is in the current tree (oriented towards $b$) and edge $(a,b)$ is not. Replace edge $(b,c)$ by edge $(a,b)$, oriented towards $a$ by first doing a ``rotation'' (move of type 2) around node $b$, and then reorienting edge $(a,b)$ towards $a$ (move of type 1). Biased edge slidings are those corresponding to the case $rank(a)<rank(b)<rank(c)$. 

 
\begin{figure}[h!]
\begin{center}
\begin{tikzpicture}
[pre/.style={<-,shorten <=1pt,semithick},
place/.style={thick,inner sep=1pt,minimum size=5mm, right},
post/.style={->,shorten >=1pt,semithick},
arrow/.style={->, aspect=4, double, shorten >=4pt, thick},
elips/.style={draw=blue!40, fill=gray!25, shape=ellipse, inner ysep=1.6cm, inner xsep=1cm},
link/.style={-,shorten >=1pt,semithick}
]
\node[name=R,shape=rectangle, inner xsep=3.5cm] {};
\node[name=T,elips] at (R.west)[left=15] {};
\node[place] (w1) at (T.mid) [left=10, below=35] {$a$};
\node[place] (w2) at (T.mid) {$b$};
\node[place] (w3) at (T.mid) [right=10, above=40] {$c$};
\draw[post] (w3) to (w2);
\draw[dashed,thick] (w1) to (w2);

\node[name=T1,elips] at (R.west)[right=15] {};
\node[place] (w1) at (T1.mid) [left=10, below=35] {$a$};
\node[place] (w2) at (T1.mid) {$b$};
\node[place] (w3) at (T1.mid) [right=10, above=40] {$c$};
\draw[post] (w2) to (w1);
\draw[dashed,thick] (w3) to (w2);

\draw[arrow] (T) + (1.6,0) to (T1)[left=3];
r(
\matrix[column sep=1cm] at (R.east)
{
\node {.}; & \node{.}; & \node{.};  \\
\node {.}; & \node{.}; & \node{.}; \\
\node {.}; & \node{.}; & \node{.}; \\
\node {.}; & \node{.}; & \node (fw3){}; \\
\node {.}; & \node{.}; & \node{.}; \\
\node {.}; & \node{.}; & \node{.}; \\
\node (fw2){}; & \node (fw1){}; & \node{.};\\
\node {.}; & \node{.}; & \node{.}; \\
\node {.}; & \node{.}; & \node{.}; \\
\node {.}; & \node{.}; & \node{.}; \\
};
\node (ww2) at (fw2)[left] {$b$};
\node (ww1) at (fw1)[right] {$b$};
\node (ww3) at (fw3)[right] {$a$};
\draw[link] (fw2)circle(0.7mm) to (fw1)circle(0.7mm);
\draw[link] (fw1)circle(0.7mm) to (fw3)circle(0.7mm);
\end{tikzpicture}

\caption{Edge slidings and associated flow.}
\label{slidings}
\end{center} 
\end{figure}
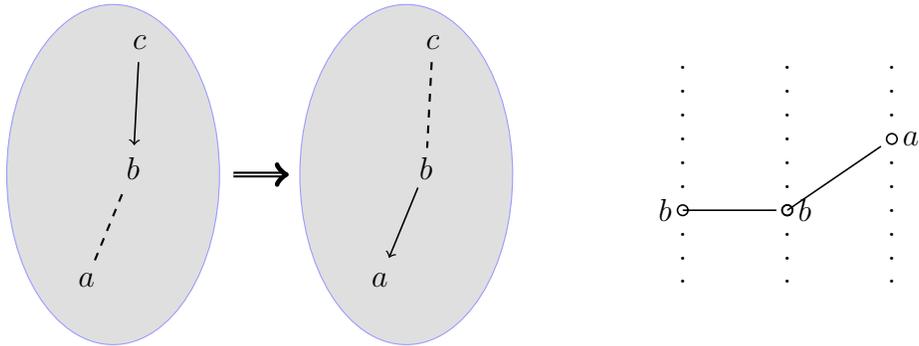 


\end{enumerate} 
How do these moves correspond to flows ? 
\begin{enumerate} 
\item The edge reversal $(w_{1},w_{2})$ corresponds (Figure~\ref{reversals}) to one unit of flow from node $w_{2}$ to node $w_{1}$ on the next level. 
\item The rotation $(w_{1},w_{2},w_{3})$ as described above corresponds (Figure~\ref{rotations}) to one unit of flow from node $w_{1}$ to node $w_{1}$ on the next level. 
\item An edge sliding (Figure~\ref{slidings}) involves using two levels of the flow network. 
\end{enumerate} 

It is easy to see that flows associated to biased edge reversals and rotations, 
or to preserving an edge (and its orientation) satisfy the biased flow condition. 
Hence (by composability of biased flows) this holds for biased slidings as well. 

It remains to show that we can transform $T_{OPT}$ to $T_{GREEDY}$ using biased moves of type 1,2 and 3. The strategy comprises three 
parts, described by  the following: 
\begin{itemize} 
\item[(a).] Consider all edges $e\in T_{OPT}\cap T_{GREEDY}$, oriented in 
the same way in both trees. We need to do nothing about them. 
\item[(b).] Consider all edges $e=(a,b)\in T_{OPT}\cap T_{GREEDY}$, with 
opposite orientations in the two trees. We will reorient them by an edge reversal. 
\item[(c).] Consider all edges in $T_{OPT}\setminus T_{GREEDY}$. 
We will iteratively replace such edges with edges in $T_{GREEDY}\setminus T_{OPT}$,
in a way such that the resulting intermediate graphs are in fact trees.  

The strategy for iterative replacement employs the current tree, denoted by $T_{1}$.
 Initially $T_{1}=T_{OPT}$. Let us consider an edge $e=(a,b)\in T_{1}\setminus T_{GREEDY}$.
 $e$ is in fact in $T_{OPT}\setminus T_{GREEDY}$, as edges added in the iterative 
process belong to $T_{GREEDY}$. Assume without loss of generality that $rank(a)>rank(b)$. 
Since $e\in E$ and $e\not \in T_{GREEDY}$, $a$ is connected to a node $c$ with $rank(c)<rank(b)$
in the same component to $b$ (thus creating a cycle that would preclude adding edge $e$).
$c$ is in fact the neighbor of $a$ on the unique path towards the root of $T_{GREEDY}$. 

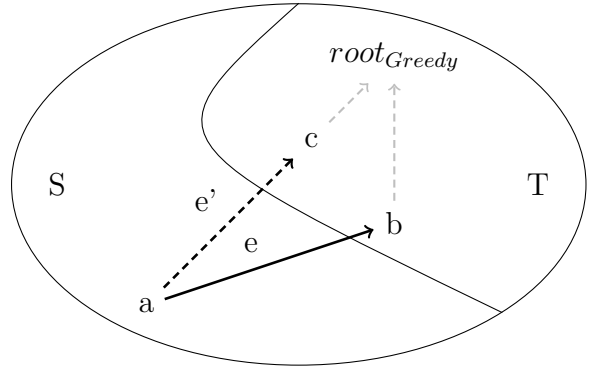
\begin{figure}[h!]
\begin{center}
\begin{tikzpicture}
[elips/.style={shape=ellipse,draw , inner ysep=1.7cm, inner xsep=2.7cm},
linkopt/.style={->,shorten >=0.5pt,draw=gray!50, thick},
linkG/.style={->,shorten >=0.5pt, densely dashed, draw=gray!50, thick},
link/.style={-,dotted, draw=gray!100}
]

\node[name=E,elips]{};

\matrix [
   column sep={1.1cm,between origins},
   row sep={1.1cm,between origins}, 
   ] at (E.mid)
{
  & & & \node (n3) {$root_{Greedy}$}; \\
  \node (n1) {}; & & \node (nc) {c};  & \\
  & & & \node (n8) {b}; \\
  \node (n12) {a}; & & & \\
};

\draw[linkopt, line width=1pt, draw=black!100] (n12) to node[auto]{e} (n8);
\draw [linkG] (n8) to (n3);
\draw [linkG] (nc) to (n3);
\draw [linkG, line width=1pt, draw=black!100] (n12) to node[auto]{e'} (nc);

\draw (E.north) .. controls (n1) .. (E.south east);
\node (T) at (E.east)[left=10]{T};
\node (S) at (E.west)[right=10]{S};
\end{tikzpicture}
\end{center}
\caption{Iterative transformation of edges from $T_{OPT}\setminus T_{GREEDY}$ into edges in $T_{GREEDY} \setminus T_{OPT}$ in Lemma~\ref{first-lemma}. First the edge $e$ is reoriented towards $a$. Then we slide it into $e^{\prime}$.}
\end{figure}

Eliminating $e$ from $T_{1}$ breaks down the set of vertices into two disjoint 
connected components $S$ and $T$, with endpoints $a,b$ into disjoint components. 
Together with 
the edges of $T_{GREEDY}$, edge $e$ determines an unique cycle $C$, consisting of the 
edges on the path from the root towards $a$ and $b$, respectively, plus edge $e$. 
There exists, 
therefore an edge $e^{\prime}\neq e$ on this cycle $C$, whose endpoints are one in 
$S$ and one in $T$. We infer the fact that $e^{\prime}\in T_{GREEDY}\setminus 
T_{1}$.  

 If $e^{\prime}$ is on the path from $a$ to the root of $T_{GREEDY}$ then we can use slidings to eliminate edge $e$ from the tree and add edge $e^{\prime}$ to the tree instead. We 
may also need to perform the reversal of edge $e$ before we can make the sliding (in case that edge $e$ is oriented towards $b$ in $T_{OPT}$). But since $rank(c)<rank(b)$ {\em all resulting 
flows (including the one corresponding to reorienting edge $e$ and then sliding it) are 
biased}.

If on the other hand $e^{\prime}$ is on the path from $b$ to the root of $T_{GREEDY}$ then we first use a greedy edge reversal (possible, as $rank(a)>rank(b)$), then edge slidings to 
replace $e^{\prime}$ by $e$. In both cases, crucially {\em   all resulting 
flows (including the one corresponding to reorienting edge $e$ and then sliding it) are 
biased}.

We only have to show that the resulting graph $T_{1}^{\prime}=T_{1}\setminus e+e^{\prime}$ is a tree (acyclic), so that the invariant is respected. Indeed, $e^{\prime}$ has its endpoints in $S$ and $T$, respectively, and is the unique edge of $T_{1}^{\prime}$ with this property. Therefore it is part of no cycle in $T_{1}^{\prime}$. Since $T_{1}$ was acyclic, $T_{1}^{\prime}$ is acyclic too (hence a tree).
\end{itemize} 

Each edge move of one of the three types above corresponds to a distinct path in the flow network, described as follows: 
\begin{itemize} 
\item[(a).] Edges $e$ shared (with the same orientation, say towards node $j$) between $T_{OPT}$ and $T_{GREEDY}$ correspond to paths between vertices with the same index $j$. 
\item[(b).] Reorienting an edge $e=(j,l)$ from $j$ towards $l$ corresponds to sending one unit of flow from node $j$ on the first level to node $l$ on the next one, and then routing that unit of flow across nodes with label $l$. 
\item[(c).] Moves of type [c.] correspond to flows in a similar way, except that they might involve multiple edges (to comply with capacity constraints), and thus multiple nontrivial steps. 
As argued, though, above, all resulting flows are biased. 
\end{itemize}  
\qed

We exemplify the transformation from the previous lemma in the example from Figure~\ref{biasedflow}. The graph $G$ consists of three nodes, considered in the order $rank(a)<rank(b)<rank(c)$ by the GREEDY algorithm. To go from the optimal solution to the greedy one we first reverse orientation on the edge (a,b). This corresponds to one unit of flow from node $b$ to node $a$ (and subsequently being routed to nodes labeled $a$). The second transformation consists of first performing an edge reversal on edge $(b,c)$ and then sliding edge $(b,c)$ towards $a$. 
The associated flow goes from $b$ to $a$, going through nodes labeled $c$, exemplifying the fact that the biased condition is only valid at the extremities of the flow. 

\begin{figure} 
\begin{center} 
\begin{tikzpicture}
[pre/.style={<-,shorten <=1pt,semithick},
place/.style={thick,inner sep=1pt,minimum size=5mm, right},
post/.style={->,shorten >=1pt,semithick},
arrow/.style={->, aspect=4, double, shorten >=4pt, thick},
elips/.style={shape=ellipse, inner ysep=1.6cm, inner xsep=0.4cm},
link/.style={-,shorten >=1pt,semithick}
]
\node[name=R,shape=rectangle, inner xsep=2cm] {};
\node[name=T1,elips] at (R.west)[left=85] {};
\node[place] (w1) at (T1.mid) [left=10, below=30] {$b$};
\node[place] (w2) at (T1.mid) [right=7, above=35] {$a$};
\node[place] (w3) at (T1.mid) [right=24, below=5] {$c$};
\node[place] at (T1.south){OPT};
\draw[post] (w2) to (w1);
\draw[post] (w3) to (w1);
\draw[dashed,thick] (w2) to (w3);

\node[name=T2,elips] at (R.west)[left=15] {};
\node[place] (w1) at (T2.mid) [left=10, below=30] {$b$};
\node[place] (w2) at (T2.mid) [right=7, above=35] {$a$};
\node[place] (w3) at (T2.mid) [right=24, below=5] {$c$};
\draw[post, thick, draw=blue!60] (w1) to (w2);
\draw[post] (w3) to (w1);
\draw[dashed,thick] (w2) to (w3);

\draw[arrow] (T1) + (1, 0) to (T2);

\node[name=T3,elips] at (R.west)[right=15] {};
\node[place] (w1) at (T3.mid) [left=10, below=30] {$b$};
\node[place] (w2) at (T3.mid) [right=7, above=35] {$a$};
\node[place] (w3) at (T3.mid) [right=24, below=5] {$c$};
\draw[post] (w1) to (w2);
\draw[post, thick] (w1) to (w3);
\draw[dashed,thick] (w2) to (w3);

\draw[arrow] (T2) + (1, 0) to (T3);

\node[name=T4,elips] at (R.west)[right=85] {};
\node[place] (w1) at (T4.mid) [left=10, below=30] {$b$};
\node[place] (w2) at (T4.mid) [right=7, above=35] {$a$};
\node[place] (w3) at (T4.mid) [right=24, below=5] {$c$};
\node[place] at (T4.south){GREEDY};
\draw[post, thick, draw=blue!60] (w1) to (w2);
\draw[post, thick, draw=blue!60] (w3) to (w2);
\draw[dashed,thick] (w1) to (w3);

\draw[arrow] (T3) + (1, 0) to (T4);

\matrix[column sep=0.5cm] at (R.east)[right=50]
{
\node (a1){a}; & \node(a2){a}; & \node(a3){a}; & \node(a4){a}; \\
\node (b1){b}; & \node(b2){b}; & \node(b3){b}; & \node(b4){b};\\
\node (c1){c}; & \node(c2){c}; & \node(c3){c}; & \node(c4){c};\\
};
\draw[link] (b1) to (a2);
\draw[link] (a2) to (a3);
\draw[link] (a3) to (a4);
\draw[link] (b1) to (c2);
\draw[link] (c2) to (c3);
\draw[link] (c3) to (a4);
\end{tikzpicture}
\end{center} 
\caption{(a). Transfoming the greedy to the optimal solution using the moves from Lemma~\ref{first-lemma}. (b). The associated biased flow.}
\label{biasedflow} 
\end{figure}
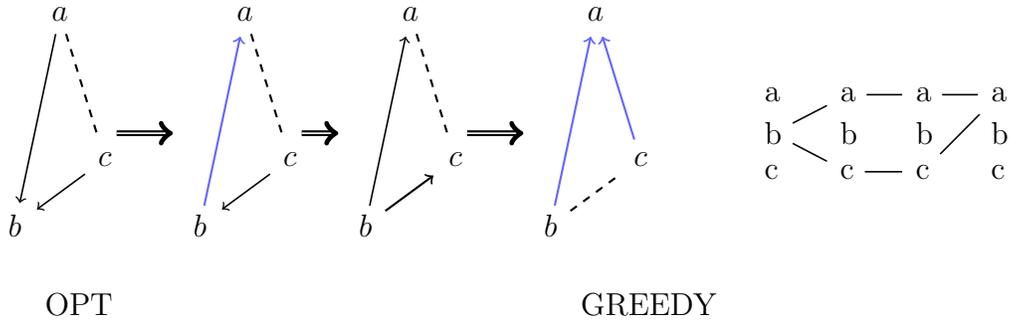 

\begin{definition} 
Let $<$ be any total path ordering such that: 
\begin{enumerate} 
\item All paths $(j,s)$, $j\neq s$ come before all paths of type $(p,p)$. 
\item Among paths of the first type $P_{i}=(j_{i},l_{i})$, $i=1,2$, $l_{1}<l_{2}\Rightarrow P_{1}<P_{2}$. 
\end{enumerate} 
\end{definition} 

\begin{lemma}\label{second-lemma} 
The flow $f$ constructed in the proof of Lemma~\ref{first-lemma} is admissible with respect to $<$. 
\end{lemma}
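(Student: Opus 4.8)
The plan is to verify the admissibility inequality~(\ref{admiss}) directly, reducing it to the greedy–choice inequality already packaged in Claims~\ref{sumA_l_j} and~\ref{sumA_r_j}. Fix a path $P$ from a first–level node $X_j$ to a last–level greedy node $Y_r$ (index $i_r$). Since the right–hand side $\sum_{W\sim Y_r} f(W)$ is the final flow into $Y_r$, which by construction equals $y_r=\Delta_r^{GREEDY}$, it suffices to bound the left–hand side by $y_r$. Moreover, enlarging $P$ only deletes terms from the left–hand sum, so the binding case is when $P$ is the smallest path from $X_j$ to $Y_r$; it is enough to treat that case. Writing $Z_k^j=\sum_{P:[j\ldots i_k]} f_P$ for the total flow sent from $X_j$ to the greedy node of rank $k$, the whole lemma thus comes down to proving $\sum_{k\ge r} Z_k^j\le y_r$.

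First I would compute the left–hand side by flow conservation at $X_j$. By the two defining properties of the ordering $<$, every path adjacent to $X_j$ that is strictly below the minimal $P:[j\ldots i_r]$ is a diagonal path to a target of rank strictly less than $r$, since within diagonal paths the ordering is by target rank and straight paths $(j,j)$ come after all diagonal ones (hence are $\ge P$). Using $\sum_k Z_k^j=X_j$, this gives
\[
\sum_{Q\sim X_j,\,P\le Q} f(Q)=X_j-\sum_{k<r} Z_k^j=\sum_{k\ge r} Z_k^j .
\]

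Next I would establish the capacity bound $Z_k^j\le a_k^j$. Here I use the reading of the flow from Lemma~\ref{first-lemma} as a family of edge trajectories: one unit of flow on a path $[j\ldots i_k]$ is a single edge of $G$ oriented towards $j$ in $T_{OPT}$ and towards $i_k$ in $T_{GREEDY}$, so any such edge is incident to both $j$ and $i_k$. When $i_k\ne j$ this forces the edge to be exactly $(j,i_k)$, so $Z_k^j\le 1=a_k^j$, the greedy–rank Lemma guaranteeing $j\notin W_k$ whenever such an edge is charged to $i_k$, which is precisely the case of the $a_k^j$ formula yielding $1$; when $j\not\sim i_k$ (and $i_k\ne j$) no such edge exists and $Z_k^j=0$; and when $i_k=j$ we have $Z_k^j\le y_k=\Delta_k^{GREEDY}=a_k^j$. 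I expect this to be the main obstacle: the multi–level network carries several diagonal edges of capacity $a_k^j$, so a purely flow–theoretic counting will not bound the \emph{total} flow between two columns, and it is the combinatorial edge–trajectory picture together with the greedy–rank Lemma that makes $Z_k^j\le a_k^j$ go through. Biasedness of $f$ enters here as a consistency check, ensuring $Z_k^j=0$ once $k>rank(j)$ and that the straight path stays on the high–rank side of the ordering.

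Finally I would close with the greedy choice. By Claims~\ref{sumA_l_j} and~\ref{sumA_r_j}, $\sum_{k\ge r} a_k^j=f(\{j\})-\sum_{k<r} a_k^j=f(W_{r-1}\cup\{j\})-f(W_{r-1})$, and since $i_r$ is the greedy maximizer at stage $r$ this is at most $f(W_{r-1}\cup\{i_r\})-f(W_{r-1})=y_r$. Chaining the three steps yields
\[
\sum_{Q\sim X_j,\,P\le Q} f(Q)=\sum_{k\ge r} Z_k^j\le\sum_{k\ge r} a_k^j\le y_r=\sum_{W\sim Y_r} f(W),
\]
which is exactly the admissibility condition~(\ref{admiss}), completing the argument.
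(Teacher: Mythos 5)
Your skeleton is the right one: reduce admissibility to the inequality $\sum_{k\geq r} Z_{k}^{j}\leq y_{r}$ via the path ordering, and close with the greedy choice $f(W_{r-1}\cup\{j\})-f(W_{r-1})\leq y_{r}$; both of those steps are essentially what the paper needs. The gap is in the middle step, the capacity bound $Z_{k}^{j}\leq a_{k}^{j}$, and specifically in the ``edge trajectory'' justification you give for it. A unit of flow in the network of Lemma~\ref{first-lemma} is \emph{not} a single edge of $G$ that is incident to both $j$ and $i_{k}$: the edge slidings of step [c] replace the edge $(b,c)$ carrying the unit by a \emph{different} edge $(a,b)$, and after a chain of slidings the edge finally charged to $i_{k}$ in $T_{GREEDY}$ need not touch $j$ at all. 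Consequently your trichotomy is wrong in its middle case: for MEST (unlike MEO, whose formula for $a_{r}^{j}$ you have implicitly used) the coefficient $a_{k}^{j}$ for non-adjacent $j\not\sim i_{k}$ equals $|\{v: v\sim i_{k},\ v\sim j,\ v\not\sim W_{k-1}\}|$, which can be positive, and flow can indeed travel from $j$ to $i_{k}$ in that case (an edge $(j,v)$ owned by $j$ in $T_{OPT}$ slides to $(i_{k},v)$ owned by $i_{k}$ in $T_{GREEDY}$). So $Z_{k}^{j}=0$ fails there, and the claimed bound $Z_{k}^{j}\leq a_{k}^{j}$ is left without proof. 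Note also that if this per-target capacity bound were available, you would have verified the system of Figure~\ref{redundant-ip} directly and shown $\alpha_{G}=1$, making the whole multi-level $\beta$ machinery superfluous --- the paper introduces $\beta$ precisely because that bound is not easy to obtain for MEST.

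The paper's proof avoids the per-target bound entirely. For a path $P:[j\ldots i_{m}]$ with $f(P)>0$ and $j\neq i_{m}$, biasedness gives $j\notin W_{m-1}$, so $j$ is still a greedy candidate at stage $m$; the argument then shows that every unit of flow leaving $j$ on a path $Q\geq P$ corresponds to an edge that could still be oriented towards $j$ were $j$ chosen at stage $m$ (edges $(j,l)$ with $l\in W_{m-1}$, or with $l\in\delta(W_{m-1})$ in a component already reached from $j$, are shown to have been routed on lexicographically \emph{smaller} paths). The number of remaining edges is at most $f(W_{m-1}\cup\{j\})-f(W_{m-1})\leq y_{i_{m}}$ by the greedy choice. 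If you want to salvage your write-up, you should replace the $Z_{k}^{j}\leq a_{k}^{j}$ step by this aggregate counting argument over all targets of rank at least $m$ simultaneously.
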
 
Consider a path $P$ between nodes $j$ and $i_{m}$,  such that $f(P)>0$. There are two cases: \\
\noindent{\bf Case 1: $j\neq i_{m}$}.\\Then, by the biased nature of the flow, $j\not \in W_{m-1}$, that is node $j$ is a candidate for the greedy algorithm at stage $m$. 
Since $i_{m}$ was chosen instead, the number of edges that would be oriented 
towards $j$, should it be chosen at stage $m$, is less or equal to the number of edges oriented towards $i_{m}$ at that stage. The second quantity is (by the definition of the GREEDY algorithm) nothing but $Y_{i_{m}}$. 

To interpret the first quantity, we will associate edges in $T_{OPT}\setminus T_{GREEDY}$ to paths of unit flow starting from $j$, in such a way that all edges mapped to some path $Q$, $P\leq Q$, could be oriented towards $j$ should this node be chosen at stage $m$. 

First, note that every possible edge $(j,l)$ with $l\in W_{m-1}$, oriented towards $j$ in the spanning tree $T_{OPT}$  is either present in $T_{GREEDY}$ (but necessarily oriented towards $l$) or has been replaced (using the process of step [c]) by an edge rooted at some 
vertex $v$ of even lower rank than $l$. Thus the edge corresponds to a one unit of flow on 
some path $Q$ between $j$ and $l$ (or $v$), a path that is lexicographically smaller than 
$P$. 

Similarly, if $l$ is not itself in $W_{m-1}$ but is in $\delta(W_{m-1})$, and 
$j$ is connected to a node in the same connected component as $l$ 
(after step $m-1$ of the GREEDY algorithm) , then edge $(j,l)$ 
cannot be in $T_{GREEDY}$, under any orientation (or else it would create a cycle $C$). 
When considered by the GREEDY algorithm it is swapped under step [c.]. Note that cycle $C$ 
(except edge $(j,l)$) is contained in $W_{m-1}\cup \delta(W_{m-1})$, with every edge in this 
cycle being assigned to a vertex in $W_{m-1}$. Hence the edge also corresponds to a one unit of flow on 
some path $Q$ between $j$ and $l$ (or $v$), a path that is lexicographically smaller than 
$P$. 

Thus any unit of flow from node $j$ sent on a path that is scheduled after path $P$ corresponds to some 
edge $(j,l)$ not covered by one of the previous two cases. All remaining such edges are among those available for $j$ at stage $m$, were it to be chosen by 
the GREEDY algorithm. Their number is, as we saw, at most $Y_{i_{m}}$, the flow into 
$i_{m}$ in the GREEDY solution.
\\
\noindent{\bf Case 2: $j=i_{m}$}.\\ This is trivial, as $P$ is the only path leaving node $j$ at this stage (and is among those that arrive at $j$).

Hence the flow is admissible. 
\qed

To complete the proof of Theorem~\ref{mest}, we simply apply Proposition~\ref{thm-main-two}. 
\qed

\section{Application to Cooperative Games} 
\label{games} 

The setting in this paper has an alternative reformulation using the language of cooperative game theory \cite{branzei2008models}. Indeed, a  problem in this area is our main target application and originally motivated our research. It forms the subject of a companion paper \cite{istrate-bonchis2012-tugames}. Here we only  provide a brief outline of our approach. 

A {\em transferable utility (TU) coalitional (cost) game} consists of a finite set of players $U$ and a monotone function (called 
{\em characteristic cost function}) $c:\mathcal{P}(U)\rightarrow {\bf R}$ that satisfies $c(\emptyset)=0$.  
A set $T\subseteq U$ is called a {\em coalition}, with $T=U$ called {\em the grand coalition}. A TU game is called {\em concave}  if its cost function is submodular. Denote $N$ the cardinal of $U$. A {\em cost allocation (imputation)} is simply a vector $x=(x_{1},\ldots, x_{N})\in {\bf R} ^{N}_{+}$.

 The notion of rationality of a cost allocation is embodied by the {\em core of a cooperative game} $G=(U,c)$ defined as the set of cost allocations that satisfy the following conditions: 
\begin{enumerate} 
\item{\bf Efficiency:} $\sum_{i\in U} x_{i}=c(U)$. 
\item{\bf Individual rationality:}  $0\leq x_{i}\leq c(\{i\})$, for all $i\in U$. 
\item{\bf Coalitional rationality:} 
\[
\sum_{i\in S}x_{i}\leq c(S)\mbox{, } \forall\mbox{ } \emptyset \neq S\subseteq U; 
\]
\end{enumerate} 

Clearly core elements in a concave game can be regarded as solutions to the corresponding instance of submodular set cover. 

Allocations in the core may be seen as ``rational'', but they need not necessarily be ``fair''. The classical approach to fairness in cost allocations is axiomatic, and identifies the  celebrated {\em Shapley Value} \cite{roth1988shapley} as 
the unique cost allocation satisfying several natural conditions. However, despite its intrinsic attractiveness and conceptual power, the Shapley value may be inappropriate as a ``fair'' solution concept for many reasons, including the setting of coalitions with a 
dynamic structure 
, in games (necessarily not concave) 
for which the Shapley value  is not in the core, or in the presence of social preferences in favor of other social arrangements (e.g. egalitarianism \cite{dutta1989concept}). 

This is not to say that any particular alternative to the Shapley value would 
fare better: for one, any such measure would violate at least one of the 
axioms that uniquely identify the Shapley value, and may have other drawbacks. 

The situation is somewhat similar to the issues arising in non-cooperative settings with respect to Nash equilibria, the canonical concept of rationality in that setting. Just as the core of a cooperative game may be very large, a non-cooperative game may have multiple Nash Equilibria. Some of these equilibria may be ``suboptimal'' the selfish nature of goals pursued by individual agents may lead to suboptimal system performance. As long as we cannot exclude some of these equilibria there is no way to rule out suboptimal behavior. 

The seminal work of Roughgarden and Tardos \cite{roughgarden2002bad},\cite{selfish-routing} has opened an interesting avenue in dealing with the multiplicity of equilibria in noncooperative settings. Instead of attempting to propose a normative solution to equilibrium selection, their price of anarchy measure takes a worst-case perspective, quantifying the degradation in system performance due to uncoordinated behavior, measured on the {\em worst} strategy profile still compatible with individual rationality. 

The objective of paper \cite{istrate-bonchis2012-tugames} is to propose approach with a similar philosophy for cooperative games. Rather than attempting to postulate any particular ``fair solution'' of a cooperative game, 
we will investigate the fairness of an arbitrary allocation in the core. 

Fairness will be measured with respect to a ``baseline'' cost distribution $q$, deemed ``reasonable''.  
Given an arbitrary cost distribution $p$ we will use the {\em Shannon divergence} $D(p||q)=\sum_{i\in I} p_{i}\log_{2}(p_{i}/q_{i})$ to measure the ``distance'' of 
distribution $p$ with respect to the baseline distribution $q$. Note that $D$ is not a metric (as it does not satisfy the triangle inequality) but is a {\em pseudo-metric} and has been employed before as a ``distance'' between two distributions. 

Given a cooperative game $G$, its {\em worst-case fairness} with respect to cost allocation $q$ is defined as the supremum of $D(p||q)$ over all distributions $p$ arising from an imputation in $core(G)$. 

Depending on the way to select $q$ we may have several versions of the worst-case fairness measures, including the following ones:   
\begin{itemize}
\item {\em strictly egalitarian:} $q$ is the uniform distribution $q_{i}=1/N$. 
\item {\em egalitarian:} $q$ is the egalitarian solution of Dutta and Ray \cite{dutta1989concept}. 
\item {\em marginalist:} $q$ is the cost distribution corresponding to the Shapley value.  
\end{itemize} 

The strictly egalitarian worst-case fairness can be directly related to the setting of this paper, as in that case Shannon divergence directly relates to entropy. Certainly this seting is the the most controversial in terms of applicability, though strict egalitarianism as an approach has featured in a substantive manner in economics and other social sciences.
One can further justify its study on grounds related to {\em mechanism design}: suppose the cooperative game is not ``given'' but can be imposed on the set of players. Viewed this way strictly egalitarian worst-case fairness is a 
measure of the {\em design}, rather than the resulting cost allocation. 

On the other hand the connection between divergence and entropy extends (perhaps in a more limited setting) to marginalist approaches too:  for certain games (including the {\em induced subgraph games} from \cite{deng1994complexity}) one can quantify the performance of the GREEDY and other approximation algorithms to the marginalist worst-case fairness. Once again we refer to \cite{istrate-bonchis2012-tugames}, where full details will be provided. 

\section{Conclusions}

Obtaining a tight result for entropy minimization problem on matroids remains a topic 
for further research. We believe that $\beta = 1$ at least for a large class of particular versions of this problem. Formulating and proving such a result remains still open, though. Even in the case of MEST we don't know whether our result is optimal. 

The game theoretic investigations opened by our results have multiple variations: notions of ``worst-case fairness'' can be investigated for a variety of combinatorial games \cite{branzei2008models}, for various other solution concepts such as the $\epsilon$-core, the least core, the kernel 
or the nucleolus.

Finally, as mentioned, the problem we studied has some promising potential applications.
It would be interesting to develop these applications.

\section*{ Acknowledgments} 

Both authors contributed equally to this work. 

The first author has been supported by a project on Postdoctoral national programs, contract CNCSIS PD\_575/2010. 

The corresponding author  has been supported by a project on Postdoctoral programs for sustainable development in a knowledge-based society, contract POSDRU/89/1.5/S/60189, cofinanced by the European Social 
Fund via the Sectorial Operational Program for Human Resource Development 2007-2013. 

\section*{Appendix: Proof of Theorem~\ref{thm:hard}}
 
We will use an idea related to the strategy employed to 
prove the NP-completeness of the Minimum Labeling Spanning Tree 
Problem \cite{chang1997minimum}. We will provide a reduction from the NP-complete problem Minimum Entropy Set Cover  to MEST. 

Indeed, let $M=(U,\mathcal{P})$ be an instance of Minimum Entropy Set Cover problem, with $U=\{1,2,\ldots, n\}$ and $\mathcal{P}=\{P_{1},P_{2},\ldots, P_{m}\}$. 

Define a graph $G_{M}$ as follows: 
\begin{enumerate} 
\item $G_{M}$ has one super-node $R$, $m+n-1$ auxiliary nodes $A_{1},\ldots A_{m+n-1}$ (connected only to $R$), $m$ nodes corresponding to sets $P_{1},P_{2},\ldots, P_{m}$
and $n$ nodes corresponding to elements $1,2,\ldots n$, respectively. 
\item Nodes corresponding to $P_{i}$ are connected to $R$ and to nodes corresponding to elements $j$, $j\in P_{i}$. 
\item These are all edges of $G_{M}$. 
\end{enumerate}

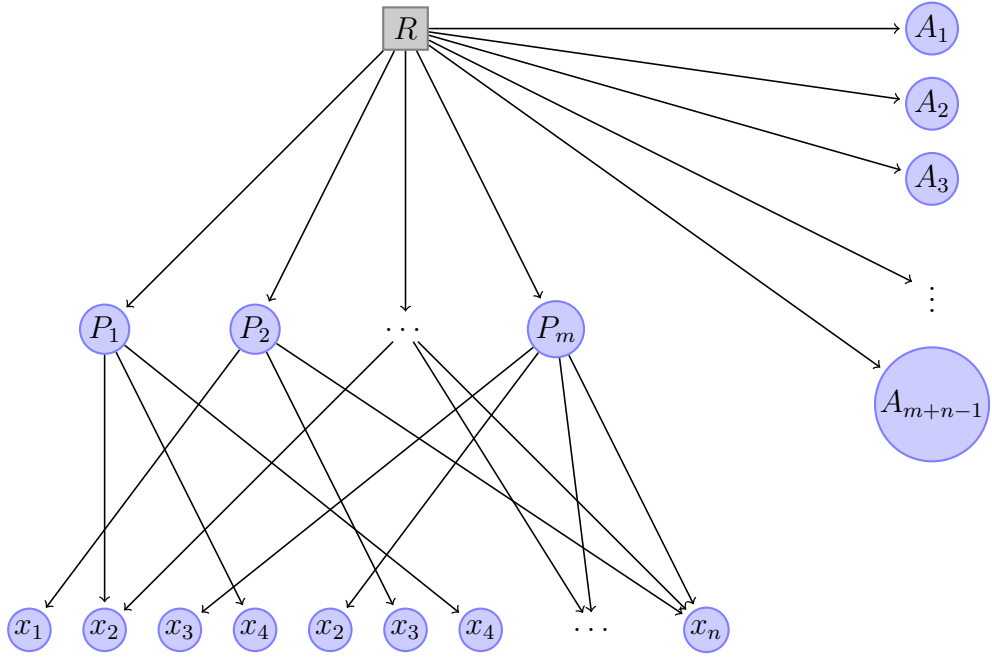
\begin{figure}[h]
\begin{center} 
\begin{tikzpicture}
[place/.style={circle,draw=blue!50,fill=blue!20,thick,inner sep=1pt,minimum size=4mm},
transition/.style={rectangle,draw=black!50,fill=black!20,thick},
pre/.style={<-,shorten <=2pt,semithick},
post/.style={->,shorten >=2pt,semithick}]
\node[transition] (root) at 	(5,8) {$R$};
\node[place] (p1) at 	(1,4) {$P_1$};
\node[place] (p2) at 	(3,4) {$P_2$};
\node[] (p3) at 	(5,4) {$\dots$};
\node[place] (pm) at 	(7, 4) {$P_m$};
\node[place] (x1) at 	(0,0) {$x_1$};
\node[place] (x2) at 	(1,0) {$x_2$};
\node[place] (x3) at 	(2,0) {$x_3$};
\node[place] (x4) at 	(3,0) {$x_4$};
\node[place] (x5) at 	(4,0) {$x_2$};
\node[place] (x6) at 	(5,0) {$x_3$};
\node[place] (x7) at 	(6,0) {$x_4$};
\node[] (x8) at 	(7.5,0) {$\dots$};
\node[place] (xn) at 	(9, 0) {$x_n$};

\node[place] (A1) at 	(12, 8) {$A_1$};/
\node[place] (A2) at 	(12, 7) {$A_2$};/
\node[place] (A3) at 	(12, 6) {$A_3$};/
\node[] (A4) at 	(12, 4.5) {$\vdots$};/
\node[place] (A5) at 	(12, 3) {$A_{m+n-1}$};/ 
\foreach \x in {p1, p2, p3, pm, A1, A2, A3, A4, A5}
	\draw[post] (root) to (\x);
\foreach \x in {x2, x4, x7}
	\draw[post] (p1) to (\x);
\foreach \x in {x1, x6, xn}
	\draw[post] (p2) to (\x);
\foreach \x in {x3, xn, x5, x8}
	\draw[post] (pm) to (\x);
\foreach \x in {x2, x8, xn}
	\draw[post] (p3) to (\x);
\end{tikzpicture}
\end{center}
\caption{Graph $G_{M}$ in the construction from Theorem~\ref{thm:hard}}
\label{mesc-graph} 
\end{figure} 

To relate the minimum entropy spanning tree on $G_{M}$ and the minimum cover on $M$ we need the following 

\begin{claim} 
  Let $1\leq a\leq b\leq a+b\leq W$. Then 
\[
-\frac{a}{W}\log_{2}(\frac{a}{W})-\frac{b}{W}\log_{2}(\frac{b}{W})\geq -\frac{a-1}{W}\log_{2}(\frac{a-1}{W})-\frac{b+1}{W}\log_{2}(\frac{b+1}{W}).
\]
\end{claim}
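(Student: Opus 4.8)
The plan is to reduce the two-variable inequality to a statement about a single concave function and then invoke the mean value theorem. Define
\[
g(x) = -\frac{x}{W}\log_2\left(\frac{x}{W}\right), \qquad x \in [0, W],
\]
with the convention $g(0) = 0$. The asserted inequality is exactly $g(a) + g(b) \geq g(a-1) + g(b+1)$, which upon collecting the two sides is equivalent to
\[
g(a) - g(a-1) \geq g(b+1) - g(b).
\]
Thus it suffices to show that the forward increment of $g$ over the interval $[a-1, a]$ dominates its increment over $[b, b+1]$.

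Next I would establish that $g$ is concave on $(0, W]$. A direct computation gives $g'(x) = -\frac{1}{W}\log_2(x/W) - \frac{1}{W\ln 2}$ and hence $g''(x) = -\frac{1}{W x \ln 2} < 0$ for all $x > 0$, so $g$ is strictly concave and $g'$ is strictly decreasing. The hypotheses guarantee that all four arguments lie in $[0, W]$: since $a \geq 1$ we have $a - 1 \geq 0$, and since $a + b \leq W$ with $a \geq 1$ we get $b + 1 \leq a + b \leq W$.

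To conclude, apply the mean value theorem on each interval: there exist $\xi_1 \in (a-1, a)$ and $\xi_2 \in (b, b+1)$ with $g(a) - g(a-1) = g'(\xi_1)$ and $g(b+1) - g(b) = g'(\xi_2)$. Since $\xi_1 < a \leq b < \xi_2$, the monotonicity of $g'$ yields $g'(\xi_1) \geq g'(\xi_2)$, which is precisely the desired inequality; here the hypothesis $a \leq b$ is exactly what forces $\xi_1 < \xi_2$. (Equivalently, one may avoid the mean value theorem and argue directly that a concave function has non-increasing unit forward differences.)

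The argument is essentially routine; the only point requiring mild care is the boundary case $a = 1$, where $a - 1 = 0$ and $g'(x) \to +\infty$ as $x \to 0^+$. This causes no trouble: $g$ is continuous at $0$ under the convention $0\log_2 0 = 0$, so the mean value theorem still applies on $[0, 1]$ and produces a finite interior point $\xi_1$, leaving the comparison $g'(\xi_1) \geq g'(\xi_2)$ intact.
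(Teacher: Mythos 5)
Your proof is correct. Both you and the paper start from the same reduction: with $g(x)=-\frac{x}{W}\log_2(x/W)$, the claim is the statement that the unit increment $g(a)-g(a-1)$ dominates $g(b+1)-g(b)$. Where you diverge is in how this monotonicity of increments is established. The paper expands the two increments algebraically into expressions of the form $\frac{1}{W}\log_2\bigl[(1+\tfrac{1}{a-1})^{a-1}\bigr]+\frac{1}{W}\log_2(a/W)$ and its analogue at $b$, then compares them term by term using the fact that $(1+\tfrac1x)^x$ is increasing together with $a-1\le b$ and $a\le b+1$; this is elementary but hinges on a somewhat ad hoc regrouping and on citing that classical monotonicity fact. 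You instead compute $g''(x)=-1/(Wx\ln 2)<0$, conclude that $g$ is concave so its unit forward differences are non-increasing, and formalize this with the mean value theorem. Your route is shorter and makes the underlying reason (concavity of the entropy summand $-p\log_2 p$) explicit rather than hidden in logarithm algebra; it also generalizes immediately to shifting mass by any amount $t\in(0,1]$ rather than exactly one unit. Your handling of the boundary case $a=1$ is the right amount of care: continuity of $g$ at $0$ under the convention $0\log_2 0=0$ is all the mean value theorem needs, even though $g'$ is unbounded near $0$; the paper deals with the same degenerate case only implicitly, via $(1+\tfrac{1}{a-1})^{a-1}\to 1$.
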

 
{\bf Proof. } 
This is equivalent to 
\[
\frac{a-1}{W}\log_{2}(\frac{a}{a-1})+\log_{2}(\frac{a}{W})\leq \frac{b}{W}\log_{2}(\frac{b+1}{b})+\log_{2}(\frac{b+1}{W}).
\]
or
\[
\frac{1}{W}\log_{2}[(1+\frac{1}{a-1})^{a-1}]+\log_{2}(\frac{a}{W})\leq \frac{1}{W}\log_{2}[(1+\frac{1}{b})^{b}]+\log_{2}(\frac{b+1}{W}).
\]
This follows easy from the monotone increasing nature of function $g(x)=(1+\frac{1}{x})^{x}$. 
\qed

So let us consider a spanning tree $T_{M}$ in $G_{M}$ of minimum entropy. $T_{M}$ has to contain edges between $R$ and $A_{i}$ (as they are the unique edge containing vertex $A_{i}$). Moreover, by a simple application of the claim, we may assume without loss of generality that edge $(T,A_{i})$ in the minimum entropy solution is contributed by vertex $T$, who has degree at least $m+n-1$ from the auxiliary nodes only, thus larger or equal to that of nodes $P_{1},\ldots, P_{m}$, in the spanning tree $T_{M}$.

Assume now, for the sake of contradiction, that some node $P_{i}$ would be a node unconnected to $R$ in $T_{M}$. Thus $P_{i}$ is connected to some non-leaf node $j$. Deleting edge $P_{i},j$, adding edge $R,P_{i}$ (contributed by $R$) and taking into account that the degree of node $j$ in $T_{M}$ is at most $m$ we would get a tree of lower entropy. 

The conclusion of this argument is that each node $P_{i}$ is connected to $R$ in $T_{M}$, 
with edge $(R,P_{i})$ contributed by $R$. 

From this conclusion it follows easily that every node $j$ is connected in $T_{M}$ to at most one $P_{i}$ (or else $T_{M}$ would have a cycle), thus corresponding to a cover $D$ in $M$. Moreover, to be a minimum entropy cover $C$ of $T_{M}$ we may assume that each such edge is contributed by node $P_{i}$. 

To compute the entropy of cover $C$ of $T_{M}$ we first consider the contribution of node $R$, equal to 
\[
-\frac{2m+n-1}{2(m+n)}\log_{2}[\frac{2m+n-1}{2(m+n)}]
\] 
Assuming node $P_{i}$ has degree $d_{i}$ in cover $C$, the contribution of such nodes to the entropy of cover $C$ is  
\small{
\begin{align*}
 & -\sum_{i=1}^{m}\frac{d_{i}}{2(m+n)}\log_{2} \frac{d_{i}}{2(m+n)} =-\frac{1}{2(m+n)}\left[\sum_{i=1}^{m}d_{i}\left(\log_{2}d_{i}-\log_{2}2(m+n)\right)\right]=\\
 & =-\frac{1}{2(m+n)}\left[\sum_{i=1}^{m}d_{i}\log_{2}(d_{i})-n\log_{2}2(m+n)\right]=\\
 & =\frac{n}{2(m+n)}\left[-\sum_{i=1}^{m}\frac{d_{i}}{n}\log_{2}\left(\frac{d_{i}}{n}\right)\right]+\frac{n}{2(m+n)}\log_{2}\frac{2(m+n)}{n}.\end{align*}
}
Thus 
\begin{align*}
Ent(C)  =  &-\frac{2m+n-1}{2(m+n)}\log_{2} \frac{2m+n-1}{2(m+n)} +\frac{n}{2(m+n)}\log_{2}\frac{2(m+n)}{n} +\\
       & +  \frac{n}{2(m+n)}\cdot Ent(D),
\end{align*}
in particular instance $M$ has a cover of entropy at most $\lambda$ if and only if instance $G_{M}$ of MEST has a cover of entropy at most 
\[
-\frac{2m+n-1}{2(m+n)}\log_{2}\left[\frac{2m+n-1}{2(m+n)}\right]+\frac{n\log_{2}(2+2m/n)}{2(m+n)}+\frac{n}{2(m+n)}\cdot \lambda. 
\]
\qed


\bibliographystyle{alpha} 
\bibliography{/home/gistrate/Dropbox/texmf/bibtex/bib/bibtheory}

%
%
%





\end{document}